\let\sigproof\proof\let\proof\relax
\let\sigendproof\endproof\let\endproof\relax
\newcommand{\inner}[2]{\left\langle #1,\, #2 \right\rangle}
\let\proof\sigproof
\let\endproof\sigendproof
\newtheoremstyle{sig}
  {}
  {}
  {\itshape}
  {}
  {\scshape}
  {.}
  {.5em}
  {#1 #2\thmnote{\quad(#3)}}
\newtheorem{remark}{Remark} \newtheorem{Theorem}{Theorem}
\newtheorem{Lemma}{Lemma}
\begin{document}

\title{Age of Information Optimization and State Error Analysis for Correlated Multi-Process Multi-Sensor Systems}

\author{Egemen Erbayat}
\email{erbayat@gwu.edu}
\affiliation{%
  \institution{The George Washington University}
  \streetaddress{}
  \city{Washington}
  \state{DC}
  \country{USA}
  \postcode{}
}

\author{Ali Maatouk}
\email{ali.maatouk@yale.edu}
\affiliation{%
  \institution{Yale University}
  \streetaddress{}
  \city{New Haven}
  \state{Connecticut}
  \country{USA}
  \postcode{}
}
\author{Peng Zou}
\email{pzou94@gwu.edu}
\affiliation{%
  \institution{The George Washington University}
  \streetaddress{}
  \city{Washington}
  \state{DC}
  \country{USA}
  \postcode{}
}
\author{Suresh Subramaniam}
\email{suresh@gwu.edu}
\affiliation{%
  \institution{The George Washington University}
  \streetaddress{}
  \city{Washington}
  \state{DC}
  \country{USA}
  \postcode{}
}

\renewcommand{\shortauthors}{Erbayat et al.}

\begin{abstract}
In this paper, we examine a multi-sensor system where each sensor may monitor more than one time-varying information process and send status updates to a remote monitor over a common channel. We consider that each sensor's status update may contain information about more than one information process in the system subject to the system's constraints. To investigate the impact of this correlation on the overall system's performance, we conduct an analysis of both the average Age of Information (AoI) and source state estimation error at the monitor. Building upon this analysis, we subsequently explore the impact of the packet arrivals, correlation probabilities, and rate of processes' state change on the system's performance. Next, we consider the case where sensors have limited sensing abilities and distribute a portion of their sensing abilities across the different processes. We optimize this distribution to minimize the total AoI of the system. Interestingly, we show that monitoring multiple processes from a single source may not always be beneficial. Our results also reveal that the optimal sensing distribution for diverse arrival rates may exhibit a rapid regime switch, rather than smooth transitions, after crossing critical system values. This highlights the importance of identifying these critical thresholds to ensure effective system performance.
\end{abstract}

\keywords{Age of Information, correlated sources, network optimization, state error analysis}

%\received{14 April 2024}
%\received[revised]{12 March 2009}
%\received[accepted]{5 June 2009}

\maketitle

\section{Introduction}

In the rapidly developing landscape of networked systems, timeliness plays an essential role in multiple aspects of communication, decision-making, and information processing, contributing significantly to the efficiency and effectiveness of systems. In this realm, the Age of Information (AoI) metric proposed in \cite{yates2012} stands as a pivotal measure, capturing the timeliness of information delivery in communication networks. Due to its importance, the AoI has been well studied in the literature, ranging from single-server systems with single or multiple sources \cite{modiano2015,mm1,sun2016,najm2018,soysal2019,9137714, yates2019,zou2023costly},  to scheduling problems \cite{modiano-sch-1, 9007478,
8845254, sch-igor-1,sch-li,sch-sun} and resource-constrained systems analysis \cite{const-ulukus,const-biyikoglu,const-arafa,const-farazi,const-parisa}.

In sensor networks, collaborative sensing among the different components of the network has been shown to aid in improving the overall performance of the network \cite{collaborativesensing}. Particularly, in such scenarios, numerous small sensor devices are strategically scattered around an area, monitoring different processes and sending updates to one or multiple central controllers \cite{wirelessnetworks}. Home security systems with multiple motion sensors are a good example of how devices can work together to improve efficiency. Each sensor can focus on a specific area and send status updates for that area. However, if there is an overlap between the fields of view of different sensors, they can share information about those areas. This collaboration among devices is referred to as correlation. In scenarios where network resources are constrained, such collaboration can strengthen the system's efficacy and efficiency. It resembles orchestrating a network of compact, intelligent devices working in unison to gather and exchange data, thereby enabling thorough and punctual monitoring.

Given the importance of the AoI in sensor networks, such correlation in status updates has been investigated in the literature.
%The correlation among sources of this nature is crucial for optimizing the AoI. When information sources that perceive the same aspects, such as motion detection, are densely deployed, there is a correlation among status updates in both spatial and temporal dimensions.
In \cite{he2018}, a sensor network that has overlapping fields is considered, and the authors presented a joint optimization approach for fog node assignment and transmission scheduling for sensors to minimize the age of multi-view image data. Similarly, in \cite{tong2022}, the authors considered cameras that monitor overlapping areas and propose scheduling algorithms for multi-channel systems for AoI-based minimization. In \cite{popovski2019,modiano2022}, the authors proposed probability-based correlation models and presented sensor scheduling policies aimed at minimizing AoI. In another line of work \cite{ramakanth2023monitoring}, the authors modeled the status updates correlation as a discrete-time Wiener process and proposed a scheduling policy that considers AoI and monitoring error. Despite these contributions, existing studies predominantly assume given correlation parameters and overlook the impact of varying correlation on system performance when addressing scheduling problems. However, optimizing correlation parameters, such as optimizing the placement of the sensors in an area, is also crucial. This underscores the need for further research to investigate how changes in correlation affect system dynamics and to identify optimal correlation parameters under constraints. Our research fills this gap by systematically exploring the effects of correlation variations on system performance. To that end, the main contributions of this paper are summarized as follows:
\begin{itemize}
 \item As a first step, we introduce the system model, taking into account the correlation at hand. Then, by analyzing this system, we present an equivalent process-centric formulation that simplifies the subsequent analysis. 
 \item Following that, we analyze the AoI for each process separately in relation to correlation parameters, formulating closed-form expressions for their averages in the considered M/M/1/1 system. We also consider the estimation error performance for each process separately, given that the AoI is not always a sufficient metric in remote-tracking applications \cite{sun-error}. Our results draw from a stochastic analysis of these metrics that consider all the possible events.
  \item Subsequently, we derive the correlation distribution that minimizes the AoI for three distinct scenarios. Specifically, we investigate three different sensor constraints and determine the optimal solution for each case.
   \item Furthermore, we present numerical implementations to validate the closed-form expressions we derived. We compare different parameter configurations, focusing on their average AoI and error ratio. Our results highlight the impact of status updates correlation on both the AoI and estimation error.
   \item  Lastly, we investigate the optimal correlation distribution and implement the derived optimal policies. Interestingly, our implementations showcase that the optimal distribution policy undergoes a significant regime shift beyond a specific parameters threshold. This observation highlights the need for adaptive, contextually-sensitive strategies in navigating optimal solution spaces. 
\end{itemize}The rest of the paper is organized as follows. We present the system model in Section \ref{system-model}. Afterward, we formulate the equivalent and simplified system in Section \ref{reduction}. The analysis of AoI and error ratio is then conducted in Section \ref{aoi-S} and Section \ref{error-s}, respectively. In Section \ref{opt-s}, we put our optimization problem into perspective and propose solutions to find the optimal sensing distribution. Finally, we present the numerical results in Section \ref{numerical-s} while Section \ref{conc-s} concludes the paper.

\begin{figure}[!t]
  \centering
  \includegraphics[width=0.37\textwidth]{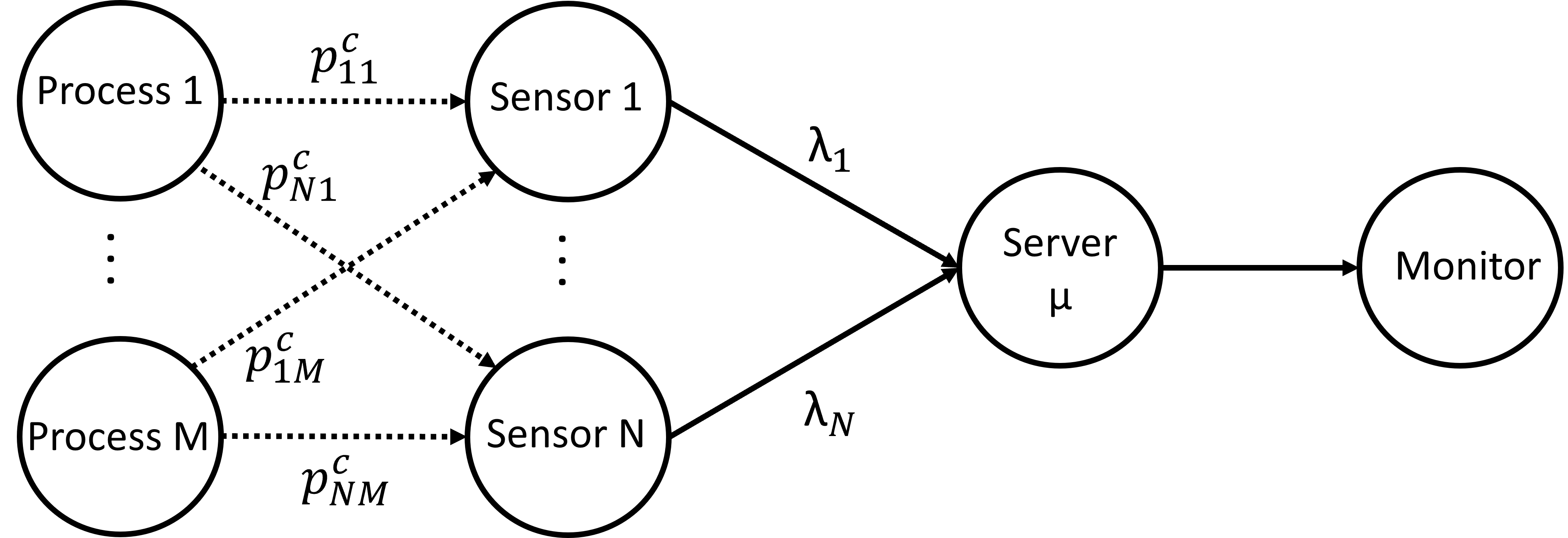}
  \caption{Illustration of our system model.}
  \label{fig:systemmodel}
\end{figure}
\section{System Model} \label{system-model}

Let us consider a sensor network where $N$ sensors track $M$ information processes. To keep the monitor updated, each sensor generates status updates and sends them through a common server/channel, as shown in Figure \ref{fig:systemmodel}. 
We consider that the service time of each packet is exponentially distributed with a service rate $\mu$. We also assume that sensor $i$ generates packets according to a Poisson process of rate $\lambda_i$. We adopt a zero-buffer assumption for the server in our model. This choice is motivated by previous research demonstrating its optimality for AoI minimization in certain scenarios, such as single information source systems with preemption \cite{bedewy}. While this optimality does not extend to our model, our initial numerical investigations have revealed that incorporating a buffer does not consistently enhance performance, as detailed in Appendix A. Consequently, we maintain the zero-buffer assumption throughout our analysis. Accordingly, any arriving packet that finds the server busy is dropped \cite{dataNetworks:book}. With all the above in mind, we define \(\boldsymbol{\lambda}\) as a vector representing the arrival rates from the different sensors, where \(\lambda_i\) is the arrival rate from sensor \(i\) for $i=1,
\ldots,N$. Specifically, we have:
\begin{equation}
\boldsymbol{\lambda}^T = \begin{bmatrix}
\lambda_{1} \quad \lambda_{2} \quad \dots \quad \lambda_{N} 
\end{bmatrix}.
\end{equation}

As for the information process, we consider that each physical process evolves as a time-varying discrete stochastic process. Particularly, the physical process $j$ is modeled as a Markov chain with $K$ different states. To represent these state changes, we use $\Omega^j_{ab}$ to denote the transition probability from state $a$ to state $b$ of process $j$.  In matrix form, the transition matrix \(\boldsymbol{\Omega_j} \in [0,1]^{K\times K}\) can be defined as follows: 
\begin{align}
\boldsymbol{\Omega_j} = {\small \begin{bmatrix}
\Omega^j_{11} & \Omega^j_{12} & \dots & \Omega^j_{1K} \\
\Omega^j_{21} & \Omega^j_{22} & \dots & \Omega^j_{2K} \\
\vdots & \vdots & \ddots & \vdots \\
\Omega^j_{K1} & \Omega^j_{K2} & \dots & \Omega^j_{KK}
\end{bmatrix}} \normalsize, \quad \text{for } j=1,\ldots, M.
\end{align}
In this paper, we consider that the Markov chain is irreducible and aperiodic. To that end, we can conclude the existence and uniqueness of the chain's stationary distribution. We denote the stationary distribution of the Markov chain formed with $\boldsymbol{\Omega_j}$ by:
\begin{equation}
\boldsymbol{\psi_j} = \begin{bmatrix}
\psi^j_{1} &\psi^j_{2} & \dots & \psi^j_{K}
\end{bmatrix}
\label{stationary_state}.
\end{equation}
This stationary distribution can obtained by solving the equation:
\begin{equation}
\boldsymbol{\psi_j}\cdot\boldsymbol{\Omega_j}   = \boldsymbol{\psi_j},
\end{equation}
and normalizing the resulting vector to ensure that the sum of its components is equal to 1 \cite{eigenvalue}.

%The corresponding Markov Chain for state changes is shown in Figure \ref{markov_state}.

We assume that state transitions for process $j$ occur after exponentially distributed epochs with a rate of $\zeta_{j}$. Accordingly, the generation of status updates by sensors and information process changes are decoupled, reflecting scenarios where sensors observe multiple processes simultaneously, such as a camera tracking various activities. This decoupling allows each process's state to evolve independently, regardless of active tracking, ensuring a more resilient and adaptable system. It accommodates situations where sensors may not detect every change or update for each detected change, and where processes evolve at different rates. Ultimately, this decoupling creates a robust, realistic model that better adapts to the complexities and limitations of multi-process monitoring systems, accurately representing practical constraints in real-world sensing applications.

%In this paper, we consider that the Markov chain in Figure \ref{markov_state}

Finally, to model the correlation among the different sensor observations, we assume that each packet generated by sensor $i$ contains information about the process $j$ with a correlation probability $p^c_{ij}$. The information the packet has is the state of the processes at the generation time of the packet. 
To that end, we define the correlation matrix \(\mathbf{P_C} \in [0,1]^{N\times M}\) as follows:
\begin{equation} 
\mathbf{P_C} = {\small\begin{bmatrix}
p^c_{11} & p^c_{12} & \dots & p^c_{1M} \\
p^c_{21} & p^c_{22} & \dots & p^c_{2M} \\
\vdots & \vdots & \ddots & \vdots \\
 p^c_{N1} & p^c_{N2} & \dots & p^c_{NM}
\end{bmatrix}.}
\end{equation}
Key symbols defined in this paper are summarized in Table \ref{tab:key_symbols}. After having outlined the system model, we now proceed to formulate the equivalent and simplified system in Section \ref{reduction}.
% \begin{figure}[!t] 
%   \centering
%   \includegraphics[width=0.20\textwidth]{}
%   \caption{Markov chain model of a physical process $j$ representing the state changes.}
%   \label{markov_state}
% \end{figure}
% \begin{table}[h]
%     \centering
%     \caption{Key Symbols}
%     \label{tab:key_symbols}
%     \begin{tabular}{|c|l|}
%         \hline
%         \textbf{Symbol} & \textbf{Definition} \\
%         \hline
%         $\lambda_i$ & Arrival rate of sensor $i$ \\
%         $\boldsymbol{\Omega_j}$ & State transition probabilities of process $j$ \\
%         $\zeta_j$ & State change rate of process $j$ \\
%         $\boldsymbol{\psi_j}$ & Stationary distribution of $\boldsymbol{\Omega_j}$ \\
%         $\mathbf{P_C}$ &Correlation probabilities among sensors and processes \\
%         $p^c_{ij}$ & Correlation probability between sensor $i$ and process $j$ \\
%         $\lambda_C$ & Sum of sensor arrival rates \\
%         $\lambda^*_j$ & Rate of informative arrivals for process $j$\\
%         $\Tilde{p_j}$ & Probability of serving informative packet for process $j$ \\
%         \hline
%     \end{tabular}
% \end{table}

\begin{table}[h]
    \centering
    \caption{Key Symbols}
    \label{tab:key_symbols}
    \small 
    \begin{tabular}{|c|l|}
        \hline
        \textbf{Symbol} & \textbf{Definition} \\
        \hline
        $\lambda_i$ & Arrival rate of sensor $i$ \\
        $\boldsymbol{\Omega_j}$ & State transition probabilities of process $j$ \\
        $\zeta_j$ & State change rate of process $j$ \\
        $\boldsymbol{\psi_j}$ & Stationary distribution of $\boldsymbol{\Omega_j}$ \\
        $\mathbf{P_C}$ & Correlation probabilities among sensors and processes \\
        $p^c_{ij}$ & Correlation probability between sensor $i$ and process $j$ \\
        $\lambda_C$ & Sum of sensor arrival rates \\
        $\lambda^*_j$ & Rate of informative arrivals for process $j$ \\
        $\Tilde{p_j}$ & Probability of serving informative packet for process $j$ \\
        \hline
    \end{tabular}
    \normalsize 
\end{table}
\section{System Simplification through Equivalence } \label{reduction}

In the considered system, the originator of the packet containing information about any arbitrary process $j$ is irrelevant from the monitor's perspective. In fact, concerning process $j$, what matters to the monitor is whether the served packet contains information about process $j$ or not rather than which sensor provided the update. To that end, we label a status update as informative for process $j$ if it contains information on the process $j$. Otherwise, we label it as uninformative.  Building on this concept, we define the informative arrival rate vector $\boldsymbol{\lambda^*}$ as follows:
\begin{equation}
{\boldsymbol{\lambda^{*}}}^T = \begin{bmatrix}
\lambda_{1}^* & \lambda_{2}^* & \dots & \lambda_{M}^*\end{bmatrix} = \boldsymbol{\lambda}^T\mathbf{P_C},
\end{equation}
where $\lambda^*_j$ represents the informative arrival rate for process $j$. As a last step, we let $\lambda_C$ be the arrival rate of the server, which is the sum of all arrival rates, as follows:
\begin{equation}
\lambda_C = \sum_{i=1}^{N} \lambda_i.
\end{equation}
With the above entities in mind, we provide the following system equivalence lemma. 

\begin{Lemma}\label{Lem1}
$\frac{}{}$
Consider a process $j$ among $M$ processes. From the monitor's perspective, the system is equivalent to Figure \ref{fig:equv_model}, where there are two packet sources: 
\begin{itemize}
    \item packets with information, with a rate of $\lambda^*_j$, and 
    \item packets without information with a rate of  $\lambda_C - \lambda^*_j$.
\end{itemize}
\end{Lemma}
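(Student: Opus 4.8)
The plan is to prove the equivalence by showing that the only feature of the original $N$-sensor system that influences the monitor's knowledge about a fixed process $j$ is the marked point process of \emph{informative} arrivals together with the aggregate arrival process at the server, and that both of these are preserved under the claimed reduction. First I would fix process $j$ and examine the sequence of packets that actually reach the server. Since each sensor $i$ generates a Poisson($\lambda_i$) stream and these streams are independent, their superposition is a Poisson process of rate $\lambda_C=\sum_i\lambda_i$; this is the arrival process seen by the zero-buffer M/M/1/1 server, and it is exactly the total arrival rate in Figure~\ref{fig:equv_model}. The service times are i.i.d.\ exponential($\mu$) and independent of everything else, so the server's busy/idle dynamics — and hence which arriving packets are admitted versus dropped — depend on the arrival stream only through this superposed Poisson process, not on the sensor labels.

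Next I would handle the ``informative'' marking. Tag each arriving packet as informative for process $j$ (it carries the state of process $j$ at its generation epoch) or uninformative. A packet from sensor $i$ is informative independently with probability $p^c_{ij}$. By the standard thinning/coloring property of Poisson processes, independently marking the points of the superposition — where a point coming from sensor $i$ gets an independent $\mathrm{Bernoulli}(p^c_{ij})$ mark — splits the rate-$\lambda_C$ Poisson process into two independent Poisson processes: informative arrivals at rate $\sum_i\lambda_i p^c_{ij}=(\boldsymbol{\lambda}^T\mathbf{P_C})_j=\lambda_j^*$, and uninformative arrivals at rate $\lambda_C-\lambda_j^*$. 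These are precisely the two sources in the reduced model. One should note the colouring is valid even though different sensors have different informativeness probabilities, because the superposition-then-colour operation only requires that each point be marked independently given its originating stream; the resulting two streams are Poisson with the stated rates and are mutually independent.

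Finally I would argue that the monitor's process-$j$ estimate (and hence the AoI and the state-estimation error for process $j$) is a function solely of: the epochs of informative arrivals, which of those get served (a function of the joint admitted-arrival/service process), the values of process $j$'s state at those epochs, and the subsequent evolution of process $j$ (driven by its own rate-$\zeta_j$ clock and $\boldsymbol{\Omega_j}$, independent of the sensing system by the decoupling assumption). Every one of these ingredients has the same joint law in the reduced two-source system: the informative/uninformative split is distributionally identical as just shown, the server is the same M/M/1/1 queue fed by the same superposed stream, and process $j$'s dynamics are untouched. Therefore the monitor's observation process for process $j$ has the same distribution in both systems, which is the desired equivalence.

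The main obstacle — really the only subtle point — is making the thinning step rigorous when the colouring probability is not uniform across the superposed points but depends on which sensor generated each point. I would address this by invoking the marked Poisson process / independent marking theorem in the form: if $\{N_i\}$ are independent Poisson processes and each point of $N_i$ is independently kept with probability $q_i$, then the kept points form a Poisson process of rate $\sum_i q_i\lambda_i$ and the discarded points an independent Poisson process of rate $\sum_i(1-q_i)\lambda_i$ — which follows by applying ordinary thinning to each $N_i$ separately and then superposing. A secondary point worth a sentence is that ``equivalence from the monitor's perspective'' should be stated as equality in distribution of the monitor's information/age/error process for process $j$; once the three ingredients above are identified this is immediate, since the monitor has no access to sensor identities in the original model anyway.
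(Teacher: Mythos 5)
Your proposal is correct and follows essentially the same route as the paper's own proof: split each sensor's Poisson stream into informative and uninformative sub-streams with probabilities $p^c_{ij}$ and $1-p^c_{ij}$ (Poisson splitting), then superpose the informative streams to obtain a Poisson process of rate $\lambda^*_j = \sum_i p^c_{ij}\lambda_i$ and the uninformative ones at rate $\lambda_C-\lambda^*_j$. Your additional remarks on handling the non-uniform marking (thin per sensor, then merge) and on phrasing the equivalence as equality in distribution of the monitor's observation process only make explicit what the paper leaves implicit.
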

\begin{proof} The details can be found in Appendix B.%\ref{reduction-P}.
\end{proof} 
Using the above equivalence, we can analyze the environment by reducing the original system to $M$ independent systems, each with two sources as depicted in Fig. \ref{fig:equv_model}.
The independence of these 
M systems stems from the Poisson nature of packet arrivals. From the perspective of any single process $j$, the arrivals of both informative and uninformative packets from all other processes can be shown to collectively form Poisson streams, as detailed in Appendix B. 
With this in mind, in the next section, we derive a closed-form expression of the AoI for each process, taking into account both informative and uninformative status updates to comprehensively evaluate their impact on the system's dynamics.
\begin{figure}[!t]
  \centering
  \includegraphics[width=0.37\textwidth]{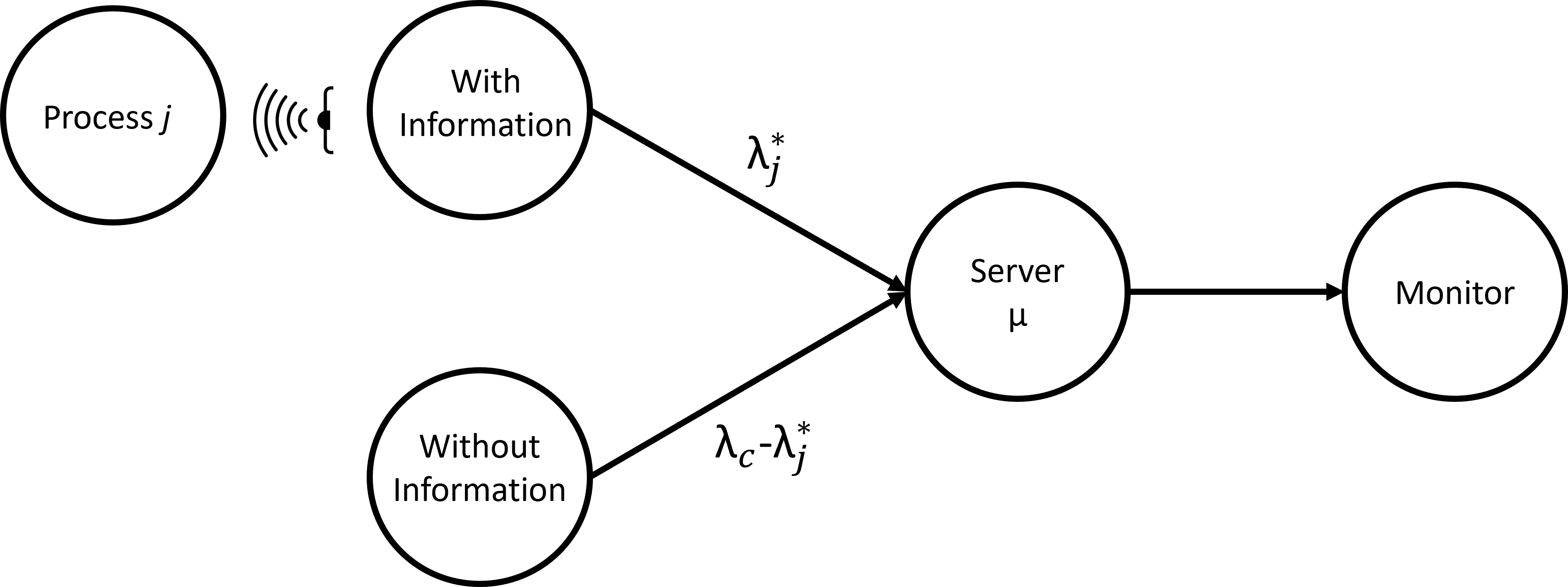}
  \caption{Equivalent system model from process $j$'s perspective.
  }
  \label{fig:equv_model}
\end{figure}
\section{Age of Information Analysis}\label{aoi-S}

In this section, we consider the age function introduced in \cite{yates2012} as a performance metric. Mathematically, the AoI of process $j$ at time $t$, denoted by $\Delta_j(t)$, can be defined as:
\begin{equation}
\Delta_j(t) = t - T_j,
\end{equation}
where $T_j$ represents the time at which the most recent informative packet for process $j$ was generated. Particularly, the age at the monitor for each process $j$ increases linearly over time until an informative status update is received, upon which a drop in the age takes place. As mentioned in Section \ref{system-model} and Section \ref{reduction}, the packet in the server may or may not have information about each process. If the served packet has information about process $j$, the AoI for process $j$ decreases just after the end of the service time. However, if the served packet has no information on process $j$, the AoI for process $j$ continues to increase linearly.
 Let \(t_k\) denote the time instant when the \(k\)-th packet is generated, and \(t^\prime_k\) represent the time instant when this packet completes service. When the server is busy with an informative or uninformative packet upon the arrival of a new packet, the new packet is dropped. To that end, we denote by \(t^d_n\) the time instant when the \(n\)-th dropped packet was generated. 
We define \(T_k\) as the service time of the $k$-th packet, given by
\begin{equation}
T_k := t^\prime_k - t_k, 
\end{equation}
and \(Y_k\) as the inter-departure time between two consecutive packets, given by
\begin{equation}
Y_k := t^\prime_{k} - t^\prime_{k-1}.
\end{equation}
We also define $\tilde{Y}_{j}^{l}$ as the inter-departure time between the $l$-th informative packet and the $(l-1)-$th informative packet from process $j$'s perspective. 
Since \(Y_k\) shares the same distribution for all $k$, we define the random variable \(Y\) to represent them collectively. Similarly, considering that \(\Tilde{Y}_{j}^l\) shares the same distribution for all $l$, we define the random variable \(\Tilde{Y}_{j}\) to represent them as a group. 
To understand the AoI process better, we illustrate the evolution of the AoI in Figure \ref{fig:sample_path}. The age of information for process $1$ at the destination node follows a linear increase over time. When a new informative status update is received, the age is reset to the time difference between the current time instant and the timestamp of the received update ($A_1$). However, if the status update is uninformative for process $1$ like the update completed at $t^\prime_2$, the age continues to increase linearly ($A_2$). The packets arriving at times $t^d_1$ and $t^d_2$ are dropped. The server is occupied with an uninformative packet at time $t^d_1$ and an informative packet at time $t^d_2$ for process $1$. In addition, $\tilde{Y}_{1}^{2}$ is the second informative interarrival time which is equal to the time difference between $t^\prime_3$ and $t^\prime_1$ that are the second and the first informative departures.

Next, we define the effective arrival rate as the rate of packets that arrive when the server is idle.
Let \( \lambda^e_{j}\) be the effective arrival rate for packets that are informative for process $j$. Consequently, we have 
\begin{equation}
\lambda^e_{j} = \frac{\mu\lambda^*_j}{\mu + \lambda_C}, \quad \textnormal{for }j=1,\ldots,M.
\end{equation}

Given the above quantity, we derive below the average AoI for each process $j$. 
\begin{Lemma}\label{Lem2}
The average AoI $\Delta_j$ for process $j$ is:
\begin{align}
 \Delta_j = \lambda^e_{j}\left(\frac{1}{2}\mathbb{E}[\Tilde{Y}_j^2] + \frac{\mathbb{E}[\Tilde{Y}_j]}{\mu}\right), \quad \textnormal{for }j=1,\ldots,M.
     \label{ageofinformation}
\end{align}
\end{Lemma}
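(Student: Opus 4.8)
The plan is to obtain $\Delta_j$ as the long‑run ratio of the area under the sawtooth age curve of process $j$ to elapsed time, exploiting the regenerative structure that the zero‑buffer, exponential‑service server of Figure~\ref{fig:equv_model} induces at the departure epochs of informative packets. First I would index by $\tau_l$ the departure instant of the $l$‑th informative packet for process $j$, so that $\tau_l-\tau_{l-1}=\tilde Y_j^l$, and partition the time axis by these epochs. On each interval $[\tau_{l-1},\tau_l)$ the age graph is a trapezoid: immediately after $\tau_{l-1}$ the age equals the system time of the $(l-1)$‑th informative packet, which — since a packet is admitted only when the server is idle and there is no queue — is exactly its service time $T_{l-1}$, exponentially distributed with rate $\mu$; the age then grows with unit slope and reaches $T_{l-1}+\tilde Y_j^l$ just before $\tau_l$. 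Hence the area accumulated over this interval is $Q_l=T_{l-1}\tilde Y_j^l+\tfrac12(\tilde Y_j^l)^2$.

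Next I would pass to the limit $\Delta_j=\lim_{\mathcal T\to\infty}\tfrac1{\mathcal T}\int_0^{\mathcal T}\Delta_j(t)\,dt$. Writing $N_j(\mathcal T)$ for the number of informative departures in $[0,\mathcal T]$, the two partial intervals at the ends contribute $o(\mathcal T)$, so $\Delta_j=\lim_{\mathcal T\to\infty}\frac{N_j(\mathcal T)}{\mathcal T}\cdot\frac{1}{N_j(\mathcal T)}\sum_{l=1}^{N_j(\mathcal T)}Q_l$. By the memorylessness of the (Poisson) informative and uninformative arrival streams supplied by Lemma~\ref{Lem1} and of the exponential service, the state of the system just after any informative departure is deterministic (the server is idle), so informative departures form a renewal process, the pairs $(T_{l-1},\tilde Y_j^l)$ are i.i.d., and $T_{l-1}$ — being measurable with respect to the history up to $\tau_{l-1}$ — is independent of $\tilde Y_j^l$, which depends only on the evolution after $\tau_{l-1}$, with $\mathbb E[T_{l-1}]=1/\mu$. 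The renewal–reward theorem then gives $\Delta_j=\big(\lim_{\mathcal T}N_j(\mathcal T)/\mathcal T\big)\,\mathbb E[Q_l]=\big(\lim_{\mathcal T}N_j(\mathcal T)/\mathcal T\big)\big(\tfrac1\mu\mathbb E[\tilde Y_j]+\tfrac12\mathbb E[\tilde Y_j^2]\big)$.

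It then remains to identify $\lim_{\mathcal T}N_j(\mathcal T)/\mathcal T=1/\mathbb E[\tilde Y_j]$ with $\lambda^e_j$. For this I would invoke rate conservation: in steady state the throughput of informative packets equals the rate at which informative packets are admitted, namely the informative arrival rate $\lambda^*_j$ times the stationary probability $\mu/(\mu+\lambda_C)$ that the M/M/1/1 server is idle — which is exactly the effective informative arrival rate $\lambda^e_j$ defined just before the lemma. Substituting yields $\Delta_j=\lambda^e_j\big(\tfrac12\mathbb E[\tilde Y_j^2]+\mathbb E[\tilde Y_j]/\mu\big)$, as claimed.

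The main obstacle is not the geometry but the probabilistic bookkeeping: carefully verifying that informative departures are genuine regeneration points (so that $\{\tilde Y_j^l\}$ is i.i.d. and the renewal–reward theorem applies), that the ``resetting'' service time $T_{l-1}$ is independent of the ensuing inter‑departure time $\tilde Y_j^l$, and that the relevant moments (in particular $\mathbb E[\tilde Y_j^2]<\infty$) are finite so that dividing expected reward by expected cycle length is legitimate. All of these rest on the combination of the zero‑buffer assumption, exponential service, and the Poisson superposition structure established through Lemma~\ref{Lem1}.
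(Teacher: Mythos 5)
Your argument is correct and is essentially the approach the paper itself takes: the paper's proof of Lemma~\ref{Lem2} simply adapts the graphical area/renewal--reward computation of \cite{mm1} with informative departures substituted for ordinary departures, and your trapezoid decomposition $Q_l = T_{l-1}\tilde{Y}_j^l + \tfrac12(\tilde{Y}_j^l)^2$, the independence of the resetting service time (mean $1/\mu$) from the ensuing informative inter-departure time, and the identification of the informative departure rate with $\lambda^e_j$ (via the idle probability $\mu/(\mu+\lambda_C)$ and PASTA) are exactly the details that adaptation entails. In short, you have written out in full the proof the paper delegates to the cited reference, so no gap remains.
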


\begin{proof}

We adapt the methodology outlined in \cite{mm1} to our scenario by leveraging their approach of utilizing inter-arrival times for calculating the age of information. Despite the absence of informative categorization in \cite{mm1}, their method of deriving the age of information based on inter-arrival times remains applicable. The main modification involves substituting variables to align with our informative arrivals.
\end{proof}

\begin{figure}[t]
  \centering
  \includegraphics[width=0.35\textwidth]{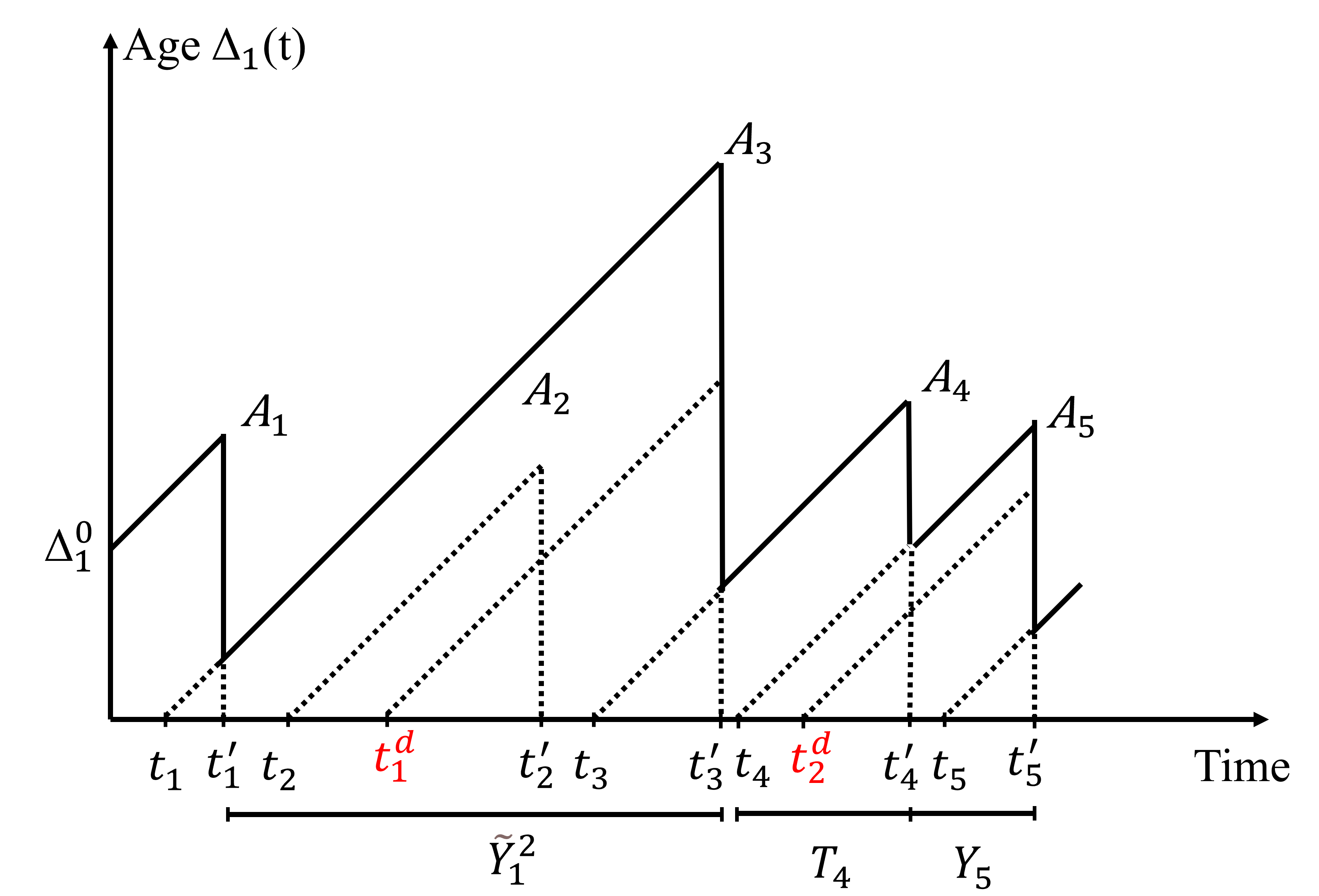}
  \caption{Evolution of AoI for process 1.}
  \label{fig:sample_path}
\end{figure}
Next, we define $\Tilde{p_j}$ as the probability of serving an informative packet for process $j$, as shown below:
\begin{equation}
\Tilde{p_j} = \frac{\lambda^*_j}{\lambda_C}.
\end{equation}
After defining $\Tilde{p_j}$, we utilize it to establish the relationship between $\Tilde{Y}_j$ and $Y$. The details of this relationship can be found in  Appendix C. Using this relationship, we present the AoI for process $j$ in Theorem
\ref{The1}, formulated in terms of $\mu$, $Y_j$,  and $\lambda^e_{j}$.
 \begin{Theorem}\label{The1}
In the considered M/M/1/1 system, the average AoI for process $j$ denoted as $\Delta_j$ is: 
  \begin{align}
 \Delta_j = \frac{\lambda_C}{\lambda_C+\mu}\left(\frac{\mu\mathbb{E}[Y^2]}{2} + \frac{\mu\mathbb{E}[Y]^2(1-\Tilde{p_j})}{\Tilde{p_j}} + \mathbb{E}[Y]\right), \quad \textnormal{for }j=1,\ldots,M. \label{final-aoi-eq}
\end{align}

\end{Theorem}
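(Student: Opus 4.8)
The plan is to start from the formula for $\Delta_j$ in Lemma~\ref{Lem2} and rewrite the two quantities $\mathbb{E}[\Tilde{Y}_j]$ and $\mathbb{E}[\Tilde{Y}_j^2]$ in terms of the moments of $Y$ and the probability $\Tilde{p}_j$. The bridge is the structural fact deferred to Appendix~C: because the arrivals are Poisson and each admitted packet carries information about process $j$ independently with probability $\Tilde{p}_j=\lambda^*_j/\lambda_C$ (a thinning of the arrival stream by the per-packet Bernoulli correlation model), the number $G$ of consecutive departures separating one informative departure from the next is geometric on $\{1,2,\dots\}$ with parameter $\Tilde{p}_j$, and it is independent of the inter-departure times, which by the regenerative structure of the M/M/1/1 queue are i.i.d.\ copies of $Y$. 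Hence $\Tilde{Y}_j \stackrel{d}{=} \sum_{k=1}^{G} Y_k$, a compound geometric sum.

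Next I would compute the first two moments of this compound sum. Wald's identity gives $\mathbb{E}[\Tilde{Y}_j]=\mathbb{E}[G]\,\mathbb{E}[Y]=\mathbb{E}[Y]/\Tilde{p}_j$. For the second moment I would use the standard random-sum identity $\mathbb{E}[\Tilde{Y}_j^2]=\mathbb{E}[G]\operatorname{Var}(Y)+\mathbb{E}[G^2]\,\mathbb{E}[Y]^2$ together with the geometric moments $\mathbb{E}[G]=1/\Tilde{p}_j$ and $\mathbb{E}[G^2]=(2-\Tilde{p}_j)/\Tilde{p}_j^{2}$; collecting terms yields $\mathbb{E}[\Tilde{Y}_j^2]=\mathbb{E}[Y^2]/\Tilde{p}_j+2\,\mathbb{E}[Y]^2(1-\Tilde{p}_j)/\Tilde{p}_j^{2}$.

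Finally I would substitute these expressions into \eqref{ageofinformation}, and rewrite $\lambda^e_j$ using $\lambda^*_j=\Tilde{p}_j\lambda_C$, so that $\lambda^e_j=\mu\Tilde{p}_j\lambda_C/(\mu+\lambda_C)$. The factor $\Tilde{p}_j$ in $\lambda^e_j$ cancels the $1/\Tilde{p}_j$ present in every term inside the bracket, the $\mu$ distributes over the bracket, and what remains is exactly \eqref{final-aoi-eq}. This closing step is pure algebra.

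The main obstacle is the first step: rigorously justifying that $\Tilde{Y}_j$ is a compound geometric sum with $G$ independent of the $Y_k$. One has to argue, via PASTA and the memorylessness of the exponential service and inter-arrival times, that whether an admitted packet is informative for process $j$ is an i.i.d.\ Bernoulli$(\Tilde{p}_j)$ mark that does not affect the departure epochs, and that the inter-departure times are themselves i.i.d.\ because every departure leaves the zero-buffer server empty and the process regenerates. Once this decoupling is established (this is the content of Appendix~C, which I would reproduce here), the remainder is just moment bookkeeping for a geometric compound sum.
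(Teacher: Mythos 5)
Your proposal is correct and follows essentially the same route as the paper's Appendix C: you represent $\Tilde{Y}_j$ as a geometric ($\Tilde{p}_j$) random sum of i.i.d.\ inter-departure times $Y$, compute $\mathbb{E}[\Tilde{Y}_j]=\mathbb{E}[Y]/\Tilde{p}_j$ and $\mathbb{E}[\Tilde{Y}_j^2]=\mathbb{E}[Y^2]/\Tilde{p}_j+2\mathbb{E}[Y]^2(1-\Tilde{p}_j)/\Tilde{p}_j^2$, and substitute into Lemma~\ref{Lem2}, exactly as the paper does (it sums over the geometric index directly where you invoke Wald and the random-sum second-moment identity, a purely cosmetic difference). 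Your added care in justifying the independence of the geometric count from the $Y_k$'s is a point the paper treats only implicitly, but it does not change the argument.
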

	\begin{proof} The details can be found in Appendix C.%\ref{aoi-theorem}.
\end{proof}

 To provide an interpretation of the above formula, we can see that if the value of $\Tilde{p_j}$ is 1, it means that every packet is informative for process $j$. In such a scenario, the system acts like a single sensor system, continuously sending status updates of process $j$ from its perspective.  On the other hand, as $\Tilde{p_j}$ approaches 0, the packets become uninformative. As a result, the AoI approaches infinity, since informative status updates are received more and more infrequently.

\section{Error Ratio Analysis}\label{error-s}

We define a binary function $\epsilon_j(t)$ such that if the state information the monitor has for process $j$ is the same as the state of process $j$ at time $t$, then $\epsilon_j(t)$ is equal to 1. Otherwise, it is equal to $0$. Then, we define the error of process $j$ as the ratio of the total duration when $\epsilon_j(t)$ is equal $0$ over the entire time horizon, denoted by $\epsilon_j$. Particularly, we have
\begin{equation}
\epsilon_j = 1 - \lim_{{T \to \infty}} \frac{1}{T} \int_{{0}}^{{T}} \epsilon_j(t) \, dt , \quad \textnormal{for }j=1,\ldots,M.
\end{equation}
Due to the nature of our system, \(\epsilon_i\) and \(\epsilon_j\) are independent for any \(i, j \in \{1, \ldots, M\} \) with \(i \neq j\). The reason behind that is that state changes of two processes are independent of each other, and the system functions as $M$ different independent systems as demonstrated in Section \ref{reduction}. Therefore, we derive the generic error \(\epsilon\) for any process to simplify our analysis. Particularly, we drop the index $j$ of the considered entities, and we use $\epsilon$, $\epsilon(t)$, $\zeta$, $\lambda^*$ and $\boldsymbol{\Omega}$ to denote the system parameters in the remainder of this section.

To find $\epsilon$ analytically, we investigate a Markov Chain that considers the current process state, the state at the monitor, and the state of the served packet (defined below). Our proposed Markov chain is 3-dimensional with dimensions $(x,y,z)$. The states \(x\) and \(y\) represent the current state and the monitor state, respectively, and each can take values between 1 and \(K\). The state \(z\) can have three possible values as depicted below: 
\begin{itemize}
    \item $z=0$: This state indicates that the server is currently idle.
    \item $z=1$: This state signifies that the server is actively serving a packet containing information from the process of interest.
    \item $z=2$: In this state, the server is occupied by a packet, but this packet does not contain information about the process in question. In other words, it carries information about other processes.
\end{itemize}
Note that when $x$ and $y$ are equal, the current state and the state the monitor has are identical. To this end, we redefine $\epsilon(t)$ as 
\begin{equation}
\epsilon(t) = \begin{cases}
    1 & \text{if } x = y, \\
    0 & \text{otherwise}.
\end{cases}
\end{equation}
With all the above in mind, we note that the system has three types of events: packet arrivals, packet departures, and state changes. Each event causes a transition in this three-dimensional Markov Chain. Let $\mathbf{P_M}$ and $\pi$ be the transition probability matrix corresponding to those transitions and the stationary distribution of the three-dimensional Markov chain. 
%Note that we consider that $\mathbf{P_M}$ is irreducible and aperiodic in order to ensure the existence and uniqueness of the stationary distribution. 
Given that the one-dimensional Markov chain shown in Section \ref{system-model} characterized by both irreducibility and aperiodicity is used to form the considered three-dimensional Markov Chain, the three-dimensional Markov chain can be shown to be irreducible and aperiodic. Next, we let $\pi(x, y, z)$ denote the stationary probability of being at state $(x,y,z)$. Then, by definition, the following equation is verified:
\begin{equation}
\sum_{x=1}^{K} \sum_{y=1}^{K} \sum_{z=0}^{2} \pi(x, y, z) = 1. 
\end{equation}
To derive this stationary distribution, we first need to calculate the probability of state change until the packet is served, given that the server is occupied with an informative packet. To do this, we need to find the probability of the process transitioning from state $i$ to state $j$, denoted as $p^n_{ij}$, while an informative packet is being served. We derive $p^n_{ij} $ for all $\quad i = 1, \dots, N$ and $j = 1, \dots, M$ in the following lemma, Lemma \ref{Lem3}. Afterward, and by leveraging these results, we formulate $\mathbf{P_M}$. 

\begin{Lemma}\label{Lem3}
Let \(\mathbf{P_N} \in [0,1]^{K\times K}\) represent the matrix of elements $p^n_{ij}$'s. The matrix can be obtained as follows: 
\begin{equation}
\mathbf{P_N} =
{\small
\begin{bmatrix}
p^n_{11} & p^n_{12} & \dots & p^n_{1K} \\
p^n_{21} & p^n_{22} & \dots & p^n_{2K} \\
\vdots & \vdots & \ddots & \vdots \\
p^n_{K1} & p^n_{K2} & \dots & p^n_{KK}
\end{bmatrix}} = \frac{\mu}{\mu+\zeta}  \left(\mathbf{I} - \frac{\zeta \boldsymbol{\Omega}}{\mu+\zeta}\right)^{-1}.
\end{equation}
\end{Lemma}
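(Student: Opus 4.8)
The plan is to condition on the number of process state changes that occur during the service of the informative packet, and to exploit the memorylessness of both the exponential service clock (rate $\mu$) and the exponential state-change clock (rate $\zeta$). Consider the instant the informative packet enters service, with the process in some state $i$. Two independent exponential clocks are running: the residual service time $\sim\text{Exp}(\mu)$ and the time to the next state transition $\sim\text{Exp}(\zeta)$. The service completes before the next transition with probability $\frac{\mu}{\mu+\zeta}$; otherwise a transition fires first, the process moves according to $\boldsymbol{\Omega}$, and — by the memoryless property — the situation regenerates with a fresh $\text{Exp}(\mu)$ residual service time and a fresh $\text{Exp}(\zeta)$ transition timer, independently of the state reached. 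Hence the number $n$ of state transitions that occur before the packet leaves service is geometrically distributed, $\Pr[n=m]=\left(\frac{\zeta}{\mu+\zeta}\right)^{m}\frac{\mu}{\mu+\zeta}$, and this count is independent of which transitions were taken (the clocks do not depend on the states visited).

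Next I would note that, conditioned on exactly $m$ transitions occurring during service, the process moves from state $i$ to state $j$ with probability $(\boldsymbol{\Omega}^{m})_{ij}$, since each transition is governed by $\boldsymbol{\Omega}$ and successive transitions are independent. Summing over $m$ via total probability gives
\begin{equation}
\mathbf{P_N}=\sum_{m=0}^{\infty}\frac{\mu}{\mu+\zeta}\left(\frac{\zeta}{\mu+\zeta}\right)^{m}\boldsymbol{\Omega}^{m}=\frac{\mu}{\mu+\zeta}\sum_{m=0}^{\infty}\left(\frac{\zeta}{\mu+\zeta}\boldsymbol{\Omega}\right)^{m}.
\end{equation}
Because $\boldsymbol{\Omega}$ is a stochastic matrix its spectral radius equals $1$, so $\frac{\zeta}{\mu+\zeta}\boldsymbol{\Omega}$ has spectral radius $\frac{\zeta}{\mu+\zeta}<1$; therefore the Neumann series converges, the matrix $\mathbf{I}-\frac{\zeta}{\mu+\zeta}\boldsymbol{\Omega}$ is invertible, and the sum equals $\left(\mathbf{I}-\frac{\zeta\boldsymbol{\Omega}}{\mu+\zeta}\right)^{-1}$, which yields the claimed expression.

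I expect the main obstacle to be the careful justification of the first paragraph: namely that the number of state changes during service is \emph{exactly} geometric with parameter $\frac{\mu}{\mu+\zeta}$ and \emph{independent} of the sequence of states visited. This needs the regeneration argument — after each state change, both the residual service time and the time to the next change are again independent exponentials with the original rates — which is where the exponential modeling assumptions on service time and on state-transition epochs are used. Once that structural fact is in place, the remaining steps (conditioning on $\boldsymbol{\Omega}^{m}$, summing the geometric series, and invoking stochasticity of $\boldsymbol{\Omega}$ for convergence) are routine.
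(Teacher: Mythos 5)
Your proposal is correct, and it reaches the paper's formula by a slightly different route. The paper conditions on the realized service duration $t$: given $t$, the number of state changes is Poisson with mean $\zeta t$ and the state moves according to $\boldsymbol{\Omega}^k$, so $\mathbf{P_N}=\int_0^\infty \mu e^{-\mu t}\sum_{k\ge 0}\frac{e^{-\zeta t}(\zeta t)^k}{k!}\boldsymbol{\Omega}^k\,dt$, and a Gamma integral collapses each term to $\frac{\mu}{\mu+\zeta}\bigl(\frac{\zeta}{\mu+\zeta}\bigr)^k\boldsymbol{\Omega}^k$, after which the same Neumann series is summed. You instead obtain the geometric weighting directly by the race-of-exponentials/regeneration argument: at each epoch the $\mathrm{Exp}(\mu)$ service clock beats the $\mathrm{Exp}(\zeta)$ state-change clock with probability $\frac{\mu}{\mu+\zeta}$, and memorylessness resets both clocks after each change, so the number of changes during service is geometric and, because the change rate $\zeta$ does not depend on the current state, independent of the jump sequence governed by $\boldsymbol{\Omega}$ --- exactly the structural fact you flag, and it holds under the paper's model. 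The two derivations are equivalent (your geometric law is precisely the marginal the paper computes by integration), but yours avoids the integral calculation entirely, and you additionally justify convergence of the Neumann series via the spectral radius of the substochastic matrix $\frac{\zeta}{\mu+\zeta}\boldsymbol{\Omega}$, a point the paper's proof leaves implicit. No gaps.
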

\begin{proof} The details can be found in Appendix D.%\ref{statechange-P}. 
\end{proof}

\begin{Lemma}\label{Lem4}
Let $\mathbf{P_M}_{(x_1,y_1,z_1)\rightarrow(x_2,y_2,z_2)} \in [0,1]$ be the transition probability from $(x_1,y_1,z_1)$ to $(x_2,y_2,z_2)$. $\mathbf{P_M}$ can be obtained as follows:\\
For every \( x_1,x_2, y_1 = 1, \ldots, K \):
\begin{equation}
\mathbf{P_M}_{(x_1,y_1,0)\rightarrow(x_2,y_1,0)} = \Omega_{x_1x_2} \frac{\zeta}{\zeta + \lambda_C},
\label{0_state_eq}
\end{equation}

\begin{equation}
\mathbf{P_M}_{(x_1,y_1,1)\rightarrow(x_2,y_1,1)} = \Omega_{x_1x_2}\frac{\zeta}{\zeta + \mu},
\label{1_state_eq}
\end{equation}
\begin{equation}
\mathbf{P_M}_{(x_1,y_1,2)\rightarrow(x_2,y_1,2)} = \Omega_{x_1x_2} \frac{\zeta}{\zeta + \mu}.
\label{2_state_eq}
\end{equation}
For every \( x_1, y_1 = 1, \ldots, K \):
\begin{equation}
\mathbf{P_M}_{(x_1,y_1,0)\rightarrow(x_1,y_1,1)} = \frac{\lambda^*}{\zeta + \lambda_C},
\label{informative_arrival_eq}
\end{equation}
\begin{equation}
\mathbf{P_M}_{(x_1,y_1,0)\rightarrow(x_1,y_1,2)} = \frac{\lambda_C-\lambda^*}{\zeta + \lambda_C},
\label{uninformative_arrival_eq}
\end{equation}
\begin{equation}
\mathbf{P_M}_{(x_1,y_1,2)\rightarrow(x_1,y_1,0)} = \frac{\mu}{\zeta + \mu}.
\label{uninformative_departure_eq}
\end{equation}
For every \( x_1, y_1,y_2 = 1, \ldots, K \):
\begin{equation}
\mathbf{P_M}_{(x_1,y_1,1)\rightarrow(x_1,y_2,0)} = \frac{\mu}{\zeta + \mu}p^n_{y_2x_1}\frac{\psi_{y_2}}{\psi_{x_1}}.
\label{informative_departure_eq}
\end{equation}
Otherwise:
\begin{equation}
\mathbf{P_M}_{(x_1,y_1,z_1)\rightarrow(x_2,y_2,z_2)} = 0.
\label{other_Events_eq}
\end{equation}
\end{Lemma}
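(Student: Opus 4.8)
The plan is to read the three-dimensional chain as the jump (embedded) chain of the underlying continuous-time dynamics and to obtain each entry of $\mathbf{P_M}$ from a race between competing exponential clocks. There are only three kinds of state-changing events: a process state transition (rate $\zeta$, always enabled), a service completion (rate $\mu$, enabled exactly when $z\in\{1,2\}$), and a packet arrival (total rate $\lambda_C$, of which $\lambda^*$ is informative and $\lambda_C-\lambda^*$ is not); an arrival that finds the server busy is dropped and leaves $(x,y,z)$ unchanged, so it is not counted as a transition of the jump chain, which is what makes the denominators differ between $z=0$ and $z\in\{1,2\}$. First I would split on the server status. For $z=0$ the enabled clocks are $\zeta$, $\lambda^*$ and $\lambda_C-\lambda^*$: the state-change branch sends $x_1\mapsto x_2$ with probability $\Omega_{x_1x_2}$ and leaves $y$ and $z$ fixed, giving \eqref{0_state_eq}; an informative arrival moves $z$ to $1$ without altering $x$ or $y$, giving \eqref{informative_arrival_eq}; an uninformative arrival moves $z$ to $2$, giving \eqref{uninformative_arrival_eq}. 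For $z=1$ and $z=2$ the enabled clocks are $\zeta$ and $\mu$: the state-change branch again acts through $\boldsymbol{\Omega}$, giving \eqref{1_state_eq} and \eqref{2_state_eq}; an uninformative departure returns $z$ to $0$ and leaves the monitor untouched, since such a packet carries nothing about the process of interest, giving \eqref{uninformative_departure_eq}; an informative departure returns $z$ to $0$ while resetting the monitor, which is the one nontrivial case treated below. Since $z$ can only move $0\to 1$, $0\to 2$, $1\to 0$ or $2\to 0$, since $x$ changes only through $\boldsymbol{\Omega}$-transitions that keep $(y,z)$ fixed, and since $y$ changes only at an informative departure, every other triple-to-triple combination is unreachable in a single event, which is \eqref{other_Events_eq}.

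The delicate step is \eqref{informative_departure_eq}. When the informative packet in service departs, the monitor is set to the process state carried by that packet, namely the process state $S_g$ at the packet's generation instant, so the chain moves from $(x_1,y_1,1)$ to $(x_1,S_g,0)$; what is needed is therefore the conditional law of $S_g$ given the current process state $x_1$ and $z=1$. I would argue as follows. By the PASTA property (Poisson arrivals see time averages), and since the processes evolve independently of the arrival stream and of the server, an arriving packet sees the process in its stationary law, so $S_g\sim\boldsymbol{\psi}$. The service time is exponential with rate $\mu$ and independent of the process evolution, so by Lemma \ref{Lem3} the pair (generation-time state, process state at the departure instant) has joint distribution $\Pr(S_g=i,\,S_{\mathrm{dep}}=j)=\psi_i\,p^n_{ij}$; memorylessness makes the event ``the departure wins the current race'' independent of this pair, so conditioning on it changes nothing, and Bayes' rule yields
\[
  \Pr(S_g = y_2 \mid S_{\mathrm{dep}} = x_1)
  = \frac{\psi_{y_2}\, p^n_{y_2 x_1}}{\sum_{k=1}^{K} \psi_k\, p^n_{k x_1}} .
\]
To simplify the denominator I would use that $\boldsymbol{\psi}$ is a left eigenvector of $\boldsymbol{\Omega}$: from $\boldsymbol{\psi}\boldsymbol{\Omega}=\boldsymbol{\psi}$ we get $\boldsymbol{\psi}\bigl(\mathbf{I}-\tfrac{\zeta\boldsymbol{\Omega}}{\mu+\zeta}\bigr)=\tfrac{\mu}{\mu+\zeta}\boldsymbol{\psi}$, hence $\boldsymbol{\psi}\,\mathbf{P_N}=\boldsymbol{\psi}$ and $\sum_{k}\psi_k\,p^n_{k x_1}=\psi_{x_1}$. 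Multiplying the resulting conditional probability $\psi_{y_2}p^n_{y_2 x_1}/\psi_{x_1}$ by the probability $\mu/(\zeta+\mu)$ that the departure clock beats the state-change clock gives exactly \eqref{informative_departure_eq}.

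Finally I would sanity-check that $\mathbf{P_M}$ is stochastic, which both confirms that no transition has been dropped and makes \eqref{other_Events_eq} consistent: for $z=0$, $\sum_{x_2}\Omega_{x_1x_2}\tfrac{\zeta}{\zeta+\lambda_C}+\tfrac{\lambda^*}{\zeta+\lambda_C}+\tfrac{\lambda_C-\lambda^*}{\zeta+\lambda_C}=1$; for $z=2$, $\sum_{x_2}\Omega_{x_1x_2}\tfrac{\zeta}{\zeta+\mu}+\tfrac{\mu}{\zeta+\mu}=1$; and for $z=1$, $\sum_{x_2}\Omega_{x_1x_2}\tfrac{\zeta}{\zeta+\mu}+\sum_{y_2}\tfrac{\mu}{\zeta+\mu}\,p^n_{y_2x_1}\,\tfrac{\psi_{y_2}}{\psi_{x_1}}=\tfrac{\zeta}{\zeta+\mu}+\tfrac{\mu}{\zeta+\mu}=1$, the last step again using $\boldsymbol{\psi}\,\mathbf{P_N}=\boldsymbol{\psi}$. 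The main obstacle is the one flagged in the second paragraph: justifying that retaining only $(x,y,z)$ and discarding the hidden generation-time state of the in-service packet still yields a genuine (lumpable) Markov chain, and, within that, pinning down the conditional law of $S_g$; this rests on PASTA, on the independence of the process dynamics from the server, and on the eigenvector identity $\boldsymbol{\psi}\,\mathbf{P_N}=\boldsymbol{\psi}$ derived from Lemma \ref{Lem3}.
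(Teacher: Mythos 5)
Your proposal is correct and follows essentially the same route as the paper's Appendix E proof: a race of exponential clocks conditioned on the server status $z$, with the informative-departure entry obtained via Bayes' rule relating the generation-time state to the departure-time state through Lemma \ref{Lem3} and the stationary distribution. Your extra touches---the explicit PASTA justification, the identity $\boldsymbol{\psi}\,\mathbf{P_N}=\boldsymbol{\psi}$ used to evaluate the denominator (where the paper simply asserts $P[X_2=x_1]=\psi_{x_1}$), and the row-sum stochasticity check---are refinements of the same argument rather than a different approach.
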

\begin{proof} The details can be found in Appendix E.% \ref{transition_matrices}.
\end{proof}
The distinction between matrices $\mathbf{P_M}$ and $\mathbf{P_N}$ is crucial here. $\mathbf{P_M}$ represents the transition probabilities in the three-dimensional Markov chain, considering the server state (idle, serving the process of interest, or serving other processes). On the other hand, $\mathbf{P_N}$ is a matrix that captures the state transition probabilities of the process itself during the time an informative packet is being served.

With $\mathbf{P_M}$ formulated, we can now obtain the stationary distribution $\pi(x, y, z)$. Note that we cannot directly use $\pi(x, y, z)$ to calculate the error over the entire time span because of the embedded nature of the Markov chain \cite{embeddedmarkov}. That is, the state transitions in $\mathbf{P_M}$ do not occur at every small time step; instead, they occur only with events of packet arrival, packet departure, and process state change, which makes the time spent in each $(x,y,z)$ state at a single visit different from each other. Therefore, we need to incorporate the length of time spent in each $(x,y,z)$ state. To that end, we let $w{(x,y,z)}$ denote the weighted holding time at state $(x,y,z)$ over the entire time span. In other words, ${W \times w{(x,y,z)}}$ represents the expected time waited at state $(x,y,z)$ during $W$ state transitions. Note that all these events, i.e., packet arrival, packet departure, and state change of the process, are memoryless, independent, and identically distributed, which preserves Markov property in the embedded chain. Due to the memoryless property, only the occurrence time of the next event has an impact on the time spent in each state. 
With all the above in mind, we define a weighted holding time function $w{(x,y,z)} = \pi(x, y, z)\mathbb{E}[T_{(x, y, z)}]$ where $\mathbb{E}[T_{(x, y, z)}]$ is the expected holding time that represents the expected time waited until the next jump in state $(x,y,z)$ at every transition. It is important to note that the holding time in state $(x,y,z)$ is independent of $\pi(x, y, z)$. We present expected holding times in Lemma \ref{Lem5}. $\mathbb{E}[T_{(x, y, z)}]$ values vary with different $z$ values, but changes in $x$ or $y$ do not affect $\mathbb{E}[T_{(x, y, z)}]$ because the possible events that can happen are the same for the same $z$. For example, the possible events are packet arrivals and state changes of the process if $z=0$ for all $x,y = 1, \dots K$.
\begin{Lemma}\label{Lem5}
Let $\mathbb{E}[T_{(x,y,z)}]$ be the expected holding time in state $(x,y,z)$. Then,  $\mathbb{E}[T_{(x,y,z)}]$ can be calculated for every \( x,y = 1, \ldots, K \) as follows:
\begin{equation}
\mathbb{E}[T_{(x,y, 0)}] = \frac{1}{\zeta+\lambda_C},    
\end{equation}
\begin{equation}
\mathbb{E}[T_{(x,y, 1)}] = \mathbb{E}[T_{(x,y, 2)}]  = \frac{1}{\zeta+\mu}. 
\end{equation}

\end{Lemma}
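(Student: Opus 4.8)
The plan is to compute, for each admissible value of $z$, the distribution of the time until the sojourn in a state $(x,y,z)$ of the embedded chain ends, and then take its expectation. The starting point is that in any state several independent memoryless clocks are running: a Poisson arrival clock of rate $\lambda_C$, a service-completion clock of rate $\mu$ that is active only when the server is busy (i.e. $z\in\{1,2\}$), and a process-state-change clock of rate $\zeta$. By the modeling assumptions of Section~\ref{system-model} — Poisson arrivals, exponential service, and exponential state-change epochs — these clocks are mutually independent and exponential, so the time until the first relevant event among any subset of them is exponential with rate equal to the sum of the corresponding rates, and its mean is the reciprocal of that sum. The only nontrivial part is pinning down, for each $z$, exactly which clocks can terminate the sojourn.

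First I would treat the idle case $z=0$. Here the server carries no packet, so the only events that can occur are a packet arrival, which seizes the idle server and moves the chain to $z=1$ or $z=2$, or a process state change, which updates $x$. These two clocks have rates $\lambda_C$ and $\zeta$, so the holding time is the minimum of two independent exponentials and is therefore exponential with rate $\zeta+\lambda_C$, giving $\mathbb{E}[T_{(x,y,0)}]=1/(\zeta+\lambda_C)$. Because none of these events depends on the particular values of $x$ or $y$, the holding time is the same across all $K^2$ states with $z=0$.

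Next I would handle the busy cases $z\in\{1,2\}$, and this is where the only real care is needed. When the server is busy, an arriving packet is dropped and the state $(x,y,z)$ is unchanged; consistent with the transitions listed in Lemma~\ref{Lem4}, such an arrival is a self-loop and does not end the sojourn, so the arrival clock must be excluded from the holding-time computation. The events that do terminate the sojourn are a service completion (rate $\mu$, moving the chain to $z=0$, possibly updating $y$ when $z=1$) and a process state change (rate $\zeta$, updating $x$ while leaving the server busy). A subtle point I would make explicit is that when a state change occurs mid-service the chain moves to $(x_2,y_1,z_1)$ and service simply continues; by the memorylessness of the exponential service time the residual service duration is again exponential with rate $\mu$, so the clock structure in the new state is identical to the original one. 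Hence the sojourn time in $(x,y,1)$ or $(x,y,2)$ is the minimum of independent $\mathrm{Exp}(\mu)$ and $\mathrm{Exp}(\zeta)$ variables, i.e. $\mathrm{Exp}(\mu+\zeta)$, so $\mathbb{E}[T_{(x,y,1)}]=\mathbb{E}[T_{(x,y,2)}]=1/(\zeta+\mu)$. Since these arguments only use the value of $z$, the holding times depend on $z$ alone and are manifestly independent of the stationary probabilities $\pi(x,y,z)$, which is exactly what justifies the factorization $w(x,y,z)=\pi(x,y,z)\,\mathbb{E}[T_{(x,y,z)}]$ used in the paper.

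I expect the main obstacle to be purely the bookkeeping of which transitions actually leave a state, rather than any analytic difficulty: one must justify that dropped arrivals during service are genuine self-loops to be excluded, and that mid-service state changes do not reset the service clock thanks to memorylessness. Once those two observations are in place, the conclusion follows from the elementary fact about the mean of a minimum of independent exponential random variables.
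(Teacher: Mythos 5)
Your proof is correct and follows essentially the same route as the paper's: identify which memoryless clocks (arrival, departure, process state change) can terminate the sojourn for each value of $z$, and take the mean of the minimum of those independent exponentials (equivalently, of the merged Poisson stream), yielding rate $\zeta+\lambda_C$ for $z=0$ and $\zeta+\mu$ for $z\in\{1,2\}$. Your extra remarks — that dropped arrivals during service do not count as transitions and that memorylessness preserves the residual service clock after a mid-service state change — are implicit in the paper's argument and only make it more explicit.
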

\begin{proof} The details can be found in Appendix F.% \ref{expected_times}.
\end{proof}

Then, the error ratio $\epsilon$ becomes the ratio of the sum of weighted holding times when $x$ and $y$ are not equal to the sum of all weighted holding times as follows:
\begin{equation}
\epsilon = \frac{\sum_{\substack{y \neq x}} \sum_{z=0}^{2} w{(x,y,z)}}{ \sum_{x=1}^{K} \sum_{y=1}^{K} \sum_{z=0}^{2} w{(x,y,z)}}.
\end{equation}

\section{Average Age Optimization}\label{opt-s}
Correlation is an important factor in minimizing AoI, as demonstrated in equation (\ref{final-aoi-eq}) and decisions such as the placement of sensors can affect the correlation among them. It is obvious that a higher correlation leads to a smaller AoI. However, increasing the correlation between two processes in a sensor's status updates could decrease the correlation between other processes due to constraints on the sensor's sensing capabilities. Therefore, it is crucial to distribute the correlation probabilities adequately to minimize AoI. In this section, we explore how to assign $\mathbf{P_C}$ under certain constraints representing different scenarios to minimize the sum average AoI. Let $\Delta_{sum}$ denote the sum AoI, depicted below:
  \begin{align}
 \Delta_{sum} = \sum_{j = 1}^{M} \frac{\lambda_C}{\lambda_C+\mu}\big(\frac{\mu\mathbb{E}[Y^2]}{2}-\mu\mathbb{E}[Y] + \mathbb{E}[Y]+ \frac{\mu\mathbb{E}[Y]}{\Tilde{p}_j}\big).
\end{align}
To that end, the objective is to solve the following problem: \begin{align*}
\min_{\mathbf{P_C} \in [0,1]^{N\times M}} \Delta_{sum}, \\  \text{s.t. 
 } h_i(\mathbf{P_C}) \leq 0 \frac{}{},\forall_{i \in [N]} \nonumber, \end{align*} 
 where $h_i(\mathbf{P_C})$ represents the $i$-th sensor's sensing ability constraint. Given that each sensor has its own constraint, the optimal solution has to meet all $N$ sensor constraints. We keep $h_i(\mathbf{P_C})$ general for now. However, later in this section, we define three different $h_i(\mathbf{P_C})$ functions for representing different scenarios.
 
Given that the only parameters affected by variable $\mathbf{P_C}$ in $\Delta_{sum}$ are $\Tilde{p}_j$'s, we can remove the other parts to simplify the problem. Specifically, the optimal $\mathbf{P_C}$ values are the same resulting from solving the problem below:
\begin{align}
\min_{\mathbf{P_C} \in \mathbb{R}^{N\times M}} f(\mathbf{P_C}) =\sum_{j = 1}^{M}\frac{1}{\Tilde{p}_j}, \label{minimization}\\  \text{s.t. } h_i(\mathbf{P_C}) \leq 0 \frac{}{},\forall_{i \in [N]}, \nonumber 
\end{align}
\begin{align}
0 \leq p^c_{ij} \leq 1, \nonumber \quad i = 1, \dots, N,\quad j = 1, \dots, M, \end{align} 
where $\begin{bmatrix}
\Tilde{p}_1 & \Tilde{p}_2 & \dots & \Tilde{p}_M\end{bmatrix} = \frac{\boldsymbol{\lambda}^T\mathbf{P_C}}{\lambda_C}$ and the constraint $ h_i(\mathbf{P_C}) \leq 0$ represents the constraint for the sensor $i$. To pursue our analysis, we start by proving the convexity of the ojective function $f$ in the following lemma. 
\begin{Lemma}\label{Lem6}
The objective function $f$ is a convex function. 
\end{Lemma}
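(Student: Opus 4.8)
The plan is to exploit the structure of $f$ as a finite sum of compositions of the scalar convex function $t\mapsto 1/t$ with affine maps of the decision variable $\mathbf{P_C}$. First I would vectorize: stacking the entries of $\mathbf{P_C}$ into a single vector $\mathrm{vec}(\mathbf{P_C})$, the identity $\tilde{p}_j=\frac{1}{\lambda_C}\sum_{i=1}^{N}\lambda_i p^c_{ij}$ exhibits each $\tilde{p}_j$ as $\av_j^{T}\mathrm{vec}(\mathbf{P_C})$ for a fixed (nonnegative) vector $\av_j$ carrying the entries $\lambda_i/\lambda_C$ in the slots corresponding to column $j$ and zeros elsewhere. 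Hence every $\tilde{p}_j$ is an affine function of $\mathbf{P_C}$.

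Next I would recall the elementary fact that $\phi(t)=1/t$ is convex on $(0,\infty)$, since $\phi''(t)=2/t^{3}>0$ there, and invoke the standard rule that the composition of a convex function with an affine map is convex on the set where it is defined. Applying this to $\phi$ and $g(\mathbf{P_C})=\tilde{p}_j$ gives convexity of $1/\tilde{p}_j$ on $\{\mathbf{P_C}:\tilde{p}_j>0\}$, and since a finite sum of convex functions is convex on the intersection of their domains, $f=\sum_{j=1}^{M}1/\tilde{p}_j$ is convex. As a self-contained alternative I could instead verify the second-order condition directly: $\nabla^{2}\bigl(1/\tilde{p}_j\bigr)=\frac{2}{\tilde{p}_j^{3}}\,\av_j\av_j^{T}\succeq 0$ whenever $\tilde{p}_j>0$, so $\nabla^{2}f=\sum_{j=1}^{M}\frac{2}{\tilde{p}_j^{3}}\av_j\av_j^{T}$ is a sum of positive semidefinite rank-one matrices, hence positive semidefinite, which yields convexity.

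There is no genuine obstacle here; the only point deserving care is the domain. The function $1/\tilde{p}_j$ blows up as $\tilde{p}_j\to 0^{+}$, so the statement must be read over the subset of $[0,1]^{N\times M}$ on which every $\tilde{p}_j$ (equivalently every informative rate $\lambda^{*}_j$) is strictly positive; I would make this restriction explicit, noting that it is exactly the regime in which the AoI in $(\ref{final-aoi-eq})$ is finite and the minimization in $(\ref{minimization})$ is meaningful. Everything else is the routine ``convex-of-affine, then sum'' argument.
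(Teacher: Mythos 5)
Your proof is correct and follows essentially the same route as the paper's: the paper establishes convexity of $g(\tilde{\mathbf{p}})=\sum_{j}1/\tilde{p}_j$ via its positive-definite diagonal Hessian and then notes that $f$ is $g$ composed with the affine map taking $\mathbf{P_C}$ to $\tilde{\mathbf{p}}$, which is exactly your ``convex-of-affine, then sum'' argument (and your direct computation $\nabla^{2}f=\sum_{j}\frac{2}{\tilde{p}_j^{3}}\av_j\av_j^{T}\succeq 0$ is just the same Hessian check carried out in the $\mathbf{P_C}$ variables). Your explicit restriction to the region where every $\tilde{p}_j>0$ is a small but worthwhile refinement of the paper's implicit assumption that the informative rates are strictly positive.
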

\begin{proof} The details can be found in Appendix G.% \ref{convexity-P}.
\end{proof} 
Even if our objective function is convex, the convexity of the problem relies on whether or not the sensor constraints preserve convexity. 
%Hence, we need to take into account the convexity of the problem while approaching it. 
When dealing with convex problems, the Karush–Kuhn–Tucker (KKT) conditions are sufficient for optimality \cite{boyd2004convex}. To that end, we derive the KKT conditions and we formulate the Lagrange function of the optimization problem as follows:
\begin{align}
    \mathcal{L}(\mathbf{P_C},\boldsymbol{\tau},\mathbf{v},\boldsymbol{\xi}) =\sum_{j = 1}^{M}\frac{\lambda_C}{\sum_{i = 1}^{N} p^c_{ij}\lambda_i} + \sum_{i = 1}^{N}\sum_{j = 1}^{M} (p^c_{ij} - 1 )\tau_{ij} \nonumber \\
 - \sum_{i = 1}^{N}\sum_{j = 1}^{M} p^c_{ij}v_{ij} + \sum_{i = 1}^{N}h_i(\mathbf{P_C})\xi_i.
\end{align}
The KKT conditions for the optimization problem described in eq. (\ref{minimization}) are as follows. $^*$ is used to show optimum variables. 
\begin{align}
 \tau^*_{ij} - v^*_{ij} - \frac{\lambda_C\lambda_i}{(\sum_{k = 1}^{N} p^{c*}_{kj}\lambda_k)^2} + \sum_{k = 1}^{N}\xi^*_k \frac{d}{dp^c_{ij}} h_k(\mathbf{P_C^*}) = 0, \nonumber\\ \quad i = 1, \dots, N, \quad j = 1, \dots, M, \label{kkt-derivative1}
\end{align}
\begin{align}
 (p^{c*}_{ij} - 1 )\tau^*_{ij} = 0,\quad i = 1, \dots, N, \quad j = 1, \dots, M, \label{kkt-cs1}
\end{align}
\begin{align}
 p^{c*}_{ij}v^*_{ij} = 0,\quad i = 1, \dots, N, \quad j = 1, \dots, M, \label{kkt-cs2}
\end{align}
\begin{align}
 h_i(\mathbf{P_C^*})\xi^*_i = 0, \quad i = 1, \dots, N, \label{kkt-cs3}
\end{align}
\begin{align}
    \boldsymbol{\tau}^*,\mathbf{v}^*,\boldsymbol{\xi}^* \geq 0, 
\end{align}
\begin{align} 1 \geq \mathbf{P_C^*} \geq 0, \end{align}
\begin{align} h_i(\mathbf{P_C^*}) \leq 0 \quad i = 1, \dots, N. \end{align} 

In the sequel, we proceed under the assumption that the sensing ability constraint of each sensor operates independently of each other. This assumption is made to establish scenarios where each sensor's sensing ability is solely dependent on itself, without influence from other devices. Notably, $row_i(\mathbf{P_C})$ denotes the probability assignments of sensor $i$ across all processes. Consequently, $h_i(\mathbf{P_C})$ exclusively comprises variables from $row_i(\mathbf{P_C})$ when considering independent sensors. Hence, we can reformulate equation (\ref{kkt-derivative1}) for independent sensors as follows:
\begin{align}
 \tau^*_{ij} - v^*_{ij} -\frac{\lambda_C\lambda_j}{(\sum_{k = 1}^{N} p^{c*}_{kj}\lambda_k)^2} + \xi^*_i \frac{d}{dp^c_{ij}} h_i(\mathbf{P_C^*}) = 0, \nonumber\\ \quad i = 1, \dots, N, \quad j = 1, \dots, M. \label{KKT-derivative2}
\end{align}
Let us now analyze these conditions. To start, consider the scenario where $p^{c*}_{ij} \neq 1$, leading to the outcome $\tau^*_{ij} = 0$. Conversely, if $p^{c*}_{ij} \neq 0$, it results in $v^*_{ij} = 0$ in eq. (\ref{kkt-cs1}) and (\ref{kkt-cs2}). This dichotomy highlights that at least one of $\tau^*_{ij}$ and $v^*_{ij}$ must be zero, given that $p^{c*}_{ij}$ cannot simultaneously be $0$ and $1$. In equation (\ref{KKT-derivative2}), the non-zero nature of $\frac{\lambda_C\lambda_j}{(\sum_{k = 1}^{N} p^{c*}_{kj}\lambda_k)^2}$ dictates that all $\tau^*_{ij}$, $v^*_{ij}$, and $\xi^*_i$ cannot simultaneously be zero. Furthermore, $h_i(\mathbf{P_C^*}) \neq 0$ implies that $\xi^*_i = 0$, in equation (\ref{kkt-cs3}). 
 Considering the sensing ability constraint for sensor $i$ in a feasible set, if it is not tight, $\xi^*_i$ must be $0$. This implies that $p^{c*}_{ij}$ is either $0$ or $1$, or else all $\tau^*_{ij}$, $v^*_{ij}$, and $\xi^*_i$ are $0$. In essence, if sensor $i$ cannot operate at maximum sensing ability, $p^{c*}_{ij}$ takes the values of either $0$ or $1$ for all $j \in [M]$. We can also say that the objective function gets smaller with the increase in any $p^{c*}_{ij}$ so that $p^{c*}_{ij}$ is set to $1$ instead of $0$ when the sensing ability constraint is not tight. We can conclude that every sensor $i$ uses its maximum sensing ability or $row_i(\mathbf{P_C^*}) = 1$ while the constraints are feasible. As a result, the convexity of the problem and the optimal solution greatly depends on the constraints. To that end, we define three different scenarios to analyze the minimum sum AoI. Particularly, a sensor's total sensing probability is contingent on the number of processes it tracks. We therefore consider three possible effects: (1) no impact, 
(2) improved total sensing probability with more processes tracked, and 
(3) weakened total sensing probability with more processes tracked. We note that the ways to achieve optimum sensing ability distribution are summarized in Remark 1. 
\begin{remark} \label{Rem1}
When the constraints in problem (\ref{minimization}) are convex, we can use KKT conditions to find the optimal correlation distribution. However, when the problem is non-convex, brute-force grid search algorithms may be used, although they suffer from high complexity. On the other hand, one can also leverage iterative algorithm such ADAM \cite{adam}, to find the solution to problem (\ref{minimization}). Although the algorithm does not necessarily converge to the global optimum, it still provides good performance in a low-complex fashion. For illustration purposes and analysis of the global optimum, we focus on a grid-search algorithm in the remainder of the paper. 
\end{remark}

Three example cases are presented to illustrate the possible regimes that can take place in the optimization. These include a linear constraint example, a quadratic convex constraint example, and a quadratic concave constraint example that correspond to no impact, improving impact, and weakening impact, respectively. For all three cases, we define a $\mathbf{b} = [b_1, ..., b_N]$ to reflect the intensity of correlation and sensors' sensing power.

\begin{enumerate}[wide, labelwidth=!, labelindent=0pt]
    \item \textbf{Linear Constraint Example:} \[\mathbf{P_C}\mathbf{1} -\mathbf{b} \leq 0.\]

This constraint indicates that there is neither a loss nor a gain to track more than one process and the total probability remains constant when probabilities are altered among different processes. If $b_i<1$, then the sensor $i$ is unable to consistently generate packets with updates. In addition, if $b_i>1$,  there are some intersections between processes. This sensor can be thought of as a camera that can track different areas and change its own position without any loss.  Let us consider the case where a camera may suffer from malfunctioning, thus hindering it from gathering information about the processes it monitors. This reflects a scenario where $b$ is diminishing. On the other hand, if the areas have some intersections, which represent a correlation between processes, or the camera has the ability to sense more than one process at the same time, this translates to the cases where $b_i$ is increasing. Example feasible sets are shown with blue area for a sensor with 2 processes in Figure \ref{linear}.
    \begin{figure}[!htbp]
\setlength{\belowcaptionskip}{-10pt}
\centering
%\captionsetup{width=0.3\linewidth}
\begin{minipage}[t]{0.33\linewidth}
\centering\includegraphics[height=0.8in]{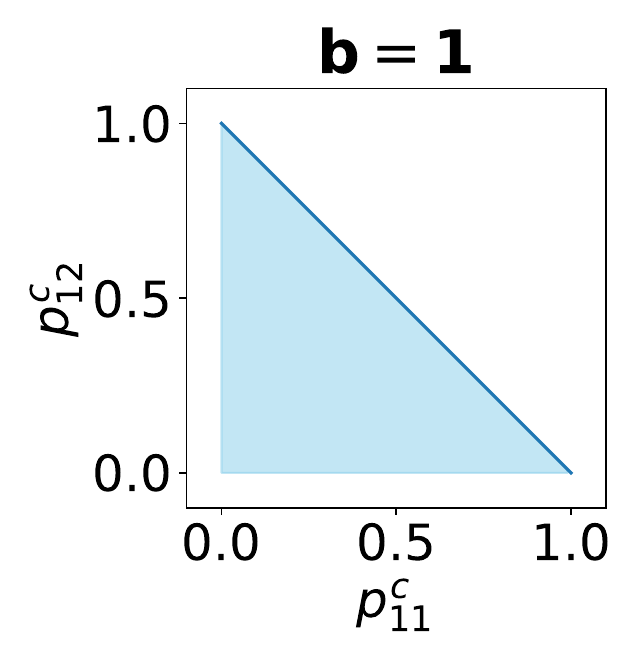}
\label{linear1}
\end{minipage}%
\begin{minipage}[t]{0.33\linewidth}
\centering\includegraphics[height=0.8in]{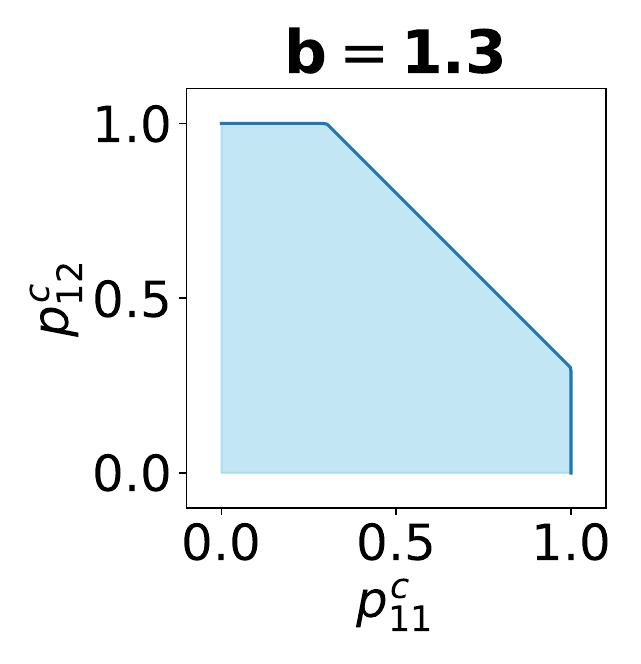}
\label{linear2}
\end{minipage}%
\begin{minipage}[t]{0.33\linewidth}
\centering\includegraphics[height=0.8in]{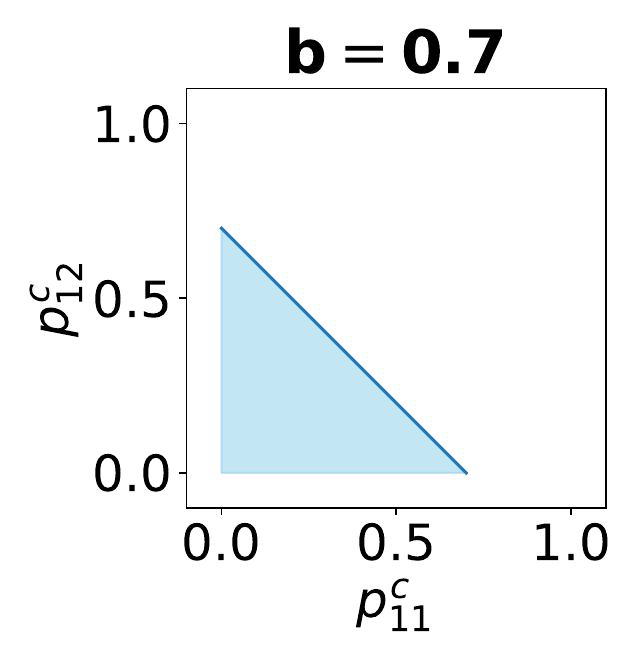}
 \label{linear3}
\end{minipage}

\caption{Example feasible sets of linear constraints with different $\mathbf{b}$ for sensor 1 with 2 processes.}
\label{linear}
\end{figure}

To minimize the sum AoI, it is crucial to allocate probabilities optimally. The objective presented in eq. (\ref{minimization}) is equivalent to maximizing the harmonic mean. It is also known that the harmonic mean is always smaller than or equal to the arithmetic mean, and they are only equal when all values are identical. In this example, we show that the arithmetic mean of the $\Tilde{p}^*$ is constant so that the optimal solution is distributing sensing probabilities to have equal $\Tilde{p}_j^*$ for all $j = 1,...,M$. This optimization process begins with obtaining sum of $\Tilde{p}_j^*$ for all $j = 1,...,M$, as follows, \[\sum_{j = 1}^{M}\Tilde{p_j} = \boldsymbol{\lambda^T}\mathbf{P_C}\mathbf{1}/\lambda_C \leq \boldsymbol{\lambda^T}\mathbf{b}.\] The KKT conditions imply that every sensor $i$ uses all sensing capabilities unless $M<b_i$. Otherwise, the value of the $i$-th row of $\mathbf{P_C}\mathbf{1}$ is $M$ when the row $i$ of the matrix $\mathbf{P_C}$ is equal to 1. Thus, we can say that $row_i(\mathbf{P_C}\mathbf{1}) = \min(M,b_i)$ and we can rewrite the summation as follows. \[\sum_{j = 1}^{M}\Tilde{p}_j^* = \boldsymbol{\lambda^T}\mathbf{P_C^*}\mathbf{1}/\lambda_C = \boldsymbol{\lambda^T}(\min(\mathbf{M},\textbf{b})).\]
Obtaining constant summation implies the arithmetic mean is also constant, and its value equals the maximum possible harmonic mean when all $\Tilde{p}_j^*$ are equal. This result implies that any $\mathbf{P_C}$ that satisfies conditions in sensing ability constraints is optimal when all $\Tilde{p}_j$ are equal. When $ N>1, M>1$, there might be more than one optimal solution. One possible solution that provides equal $\Tilde{p}_j$ is distributing sensing abilities equally, as follows:

\[p^{c*}_{ij} = \min(M,b_i)/M \frac{}{},\forall_{i \in [N], j \in [M]}.\]
This optimal solution can also be obtained from KKT conditions by letting $\frac{d}{dp^c_{ij}} h_k(\mathbf{P_C^*}) = 1$ and KKT conditions hold for this solution. 

    \item \textbf{Quadratic Convex Constraint Example:}
     \[row_i(\mathbf{P_{C}})row_i(\mathbf{P_{C}})^T - b_i \leq 0,  \quad i = 1, \dots, N.\]

We set $b_i$ values for intersections and sensor sensing abilities such that they can be less than or greater than 1 to represent the intersections and sensor sensing abilities, similar to the previous case. Then, the convex constraint states that the total probability may increase when probabilities are altered among different processes. To have a constraint that ensures the total probabilities increase when the sensor tracks more processes, let us consider a camera as a sensor that can track processes for different time periods. The camera can track a single process for a time period, and it can increase the probability of having information, but increasing the length of the time period may have a diminishing return. On the other hand, it is possible to split time and track two or more processes. While the probabilities of tracked processes decrease with an increase in the number of processes tracked, the sum of sensing probabilities increases compared to tracking a single process. Example feasible sets for a sensor with 2 processes are illustrated with blue area in Figure \ref{convex}.
\begin{figure}[!htbp]
\setlength{\belowcaptionskip}{-10pt}
\centering
%\captionsetup{width=0.30\linewidth}
\begin{minipage}[t]{0.33\linewidth}
\centering\includegraphics[height=0.8in]{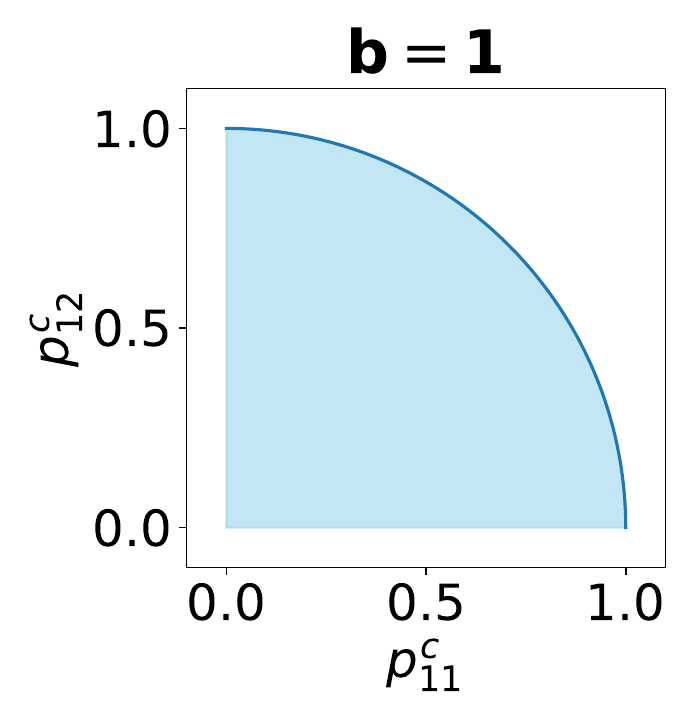}
\label{convex1}
\end{minipage}%
\begin{minipage}[t]{0.33\linewidth}
\centering\includegraphics[height=0.8in]{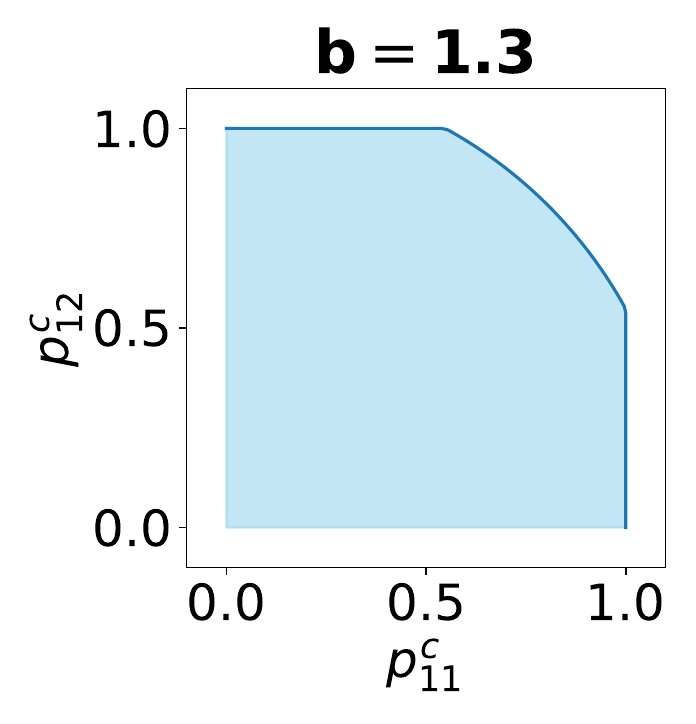}
\label{convex2}
\end{minipage}%
\begin{minipage}[t]{0.33\linewidth}
\centering\includegraphics[height=0.8in]{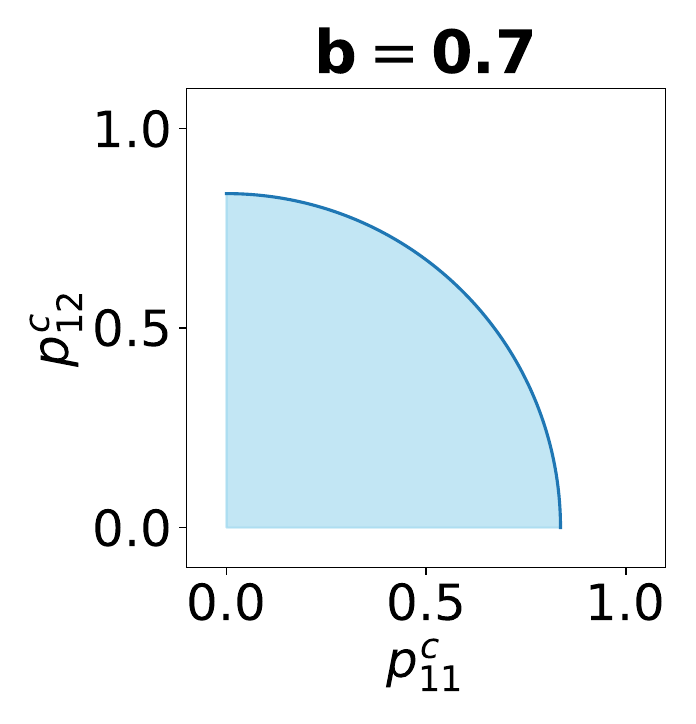}
\label{convex3}
\end{minipage}

\captionsetup{width=1\linewidth}
\caption{Example feasible sets of quadratic convex constraints with different $\mathbf{b}$ for sensor 1 with 2 processes.} 
\label{convex}

\end{figure}

Similarly to the previous case, the sensor may have the ability to track more processes than $M$ and could track all processes, but if unable, it uses all of its sensing ability to reach the optimum, which can be written as $row_i(\mathbf{P_{C}})row_i(\mathbf{P_{C}})^T = \min(M,b_i)$. If the sensor has the ability to track all processes $(b_i \geq M)$ then all probability values for the sensor are $1$, $(p^{c*}_{ij} = 1 \frac{}{},\forall_{j \in [M]})$. For the sensors that can't track all processes at the same time $(b_i < M)$, we use KKT conditions. Let us consider sensor $i$ and assume $p^{c*}_{ij} \neq 1$ or $p^{c*}_{ij} \neq 0$ for all $j \in [M]$, we find that $v^*_{ij} = \tau^*_{ij} = 0$ for all $j \in [M]$. Additionally, the derivative of $h_k(\mathbf{P_C^*})$ with respect to $p^c_{ij}$ is $2p^{c*}_{ij}$. Substituting these values into equation (\ref{KKT-derivative2}), we obtain the condition below.
\[\frac{\lambda_i}{\lambda_C \xi^*_i} =  2p^{c*}_{ij}\Tilde{p}_j^{*2}, \quad i = 1, \dots, N, \quad j = 1, \dots, M.\]
When we keep $i$ constant, we have the same left-hand side, so the optimal solution must satisfy these conditions. In this case, a potential solution could be to distribute the sensing ability of the sensor equally such that,
\[p^{c*}_{ij} = \sqrt{\min(M,b_i)/M} \frac{}{}, \quad i = 1, \dots, N, \quad j = 1, \dots, M, \]  and KKT conditions hold for this solution.

    \item \textbf{Quadratic Concave Constraint Example:} \[b_i - row_i(\mathbf{1-P_{C}})row_i(\mathbf{1-P_{C}})^T  \leq 0,  \quad i = 1, \dots, N.\]
We assign $b_i$ values to intersections and sensor sensing ability, which can be less than or greater than 1 to represent them, similar to previous cases. This constraint indicates that the overall probability may decrease when probabilities are changed across different processes, rather than tracking a single process. Let us consider an example where a camera can track more than one process, but changing the camera's pose takes some time, during which it cannot generate updates with information. In this scenario, tracking more processes increases the total probability of losses, and the total probability is higher when the camera focuses only on a single process. However, if there is no update from the other processes, its AoI goes to infinity, which is also an undesired distribution. This trade-off decides the optimal distribution. Figure \ref{concave} illustrates the feasible sets for a sensor that has 2 processes with blue area.
            \begin{figure}[!htbp]
\setlength{\belowcaptionskip}{-10pt}
\centering
\captionsetup{width=0.3\linewidth}
\begin{minipage}[t]{0.33\linewidth}
\centering\includegraphics[height=0.8in]{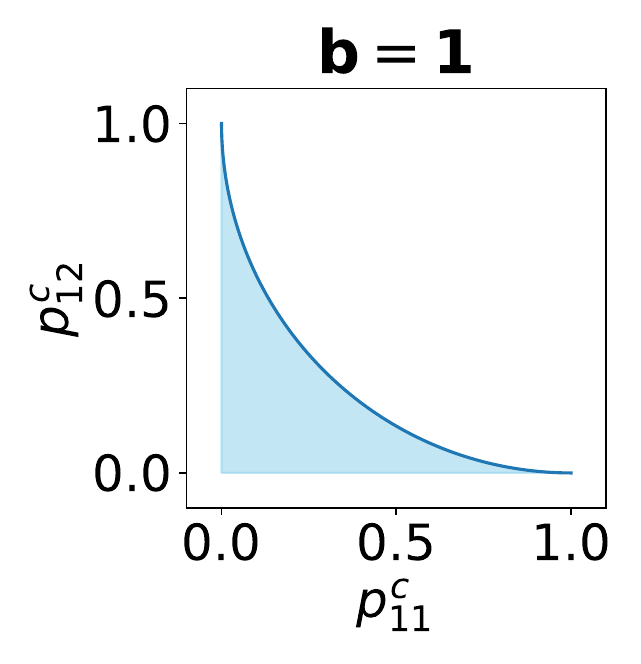}
 \label{concave1}
\end{minipage}%
\begin{minipage}[t]{0.33\linewidth}
\centering\includegraphics[height=0.8in]{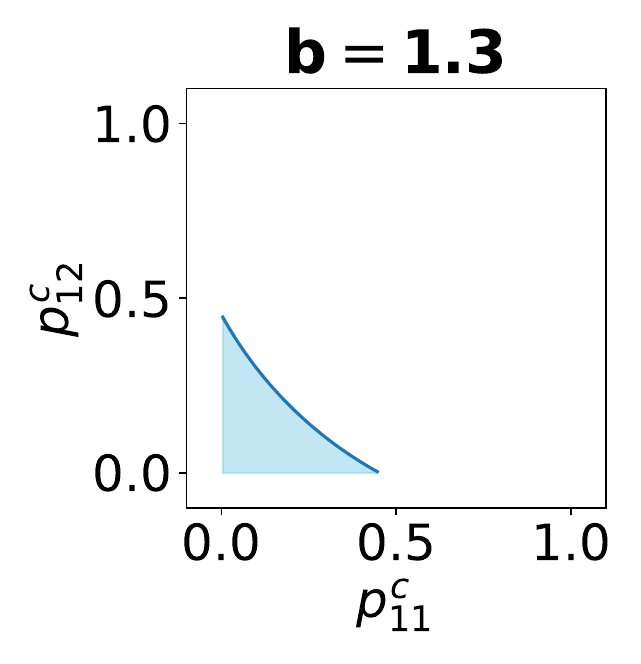}
\label{concave2}
\end{minipage}%
\begin{minipage}[t]{0.33\linewidth}
\centering\includegraphics[height=0.8in]{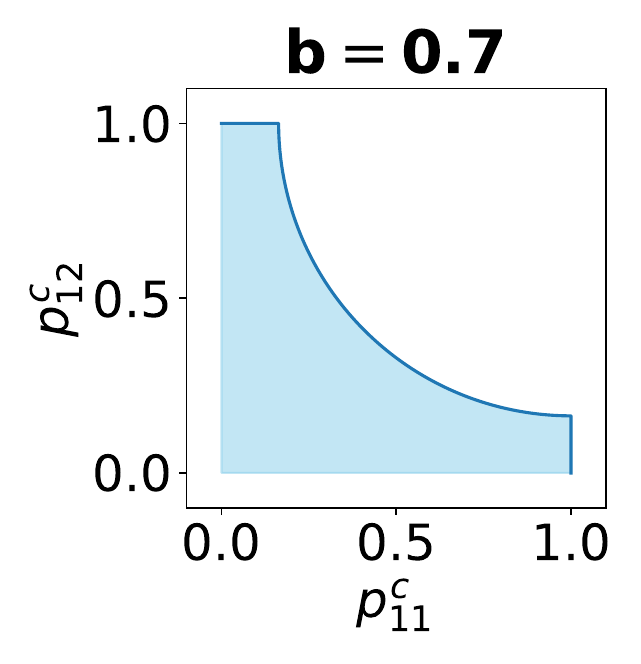}
\label{concave3}
\end{minipage}
\captionsetup{width=1\linewidth}

\caption{Example feasible sets of quadratic concave constraints with different $\mathbf{b}$ for sensor 1 with 2 processes.}
\label{concave}

\end{figure}

In this case, we end up with a non-convex set, unlike the previous two cases. As a result, our optimization problem is no longer convex. Although equal distribution satisfies KKT conditions, it may not be the global optimum. Therefore, non-convex optimization approaches, such as ADAM optimizer \cite{adam}, can be applied in such cases. In the next section, first, we verify our theoretical analysis by presenting numerical results, and then we present optimal $\mathbf{P_C}$ distributions for different scenarios.
\end{enumerate}	
\section{Numerical Results}\label{numerical-s}
We categorize our numerical analysis into two groups. First, we assume each sensor is assigned to a process, which means
every sensor has a target process and every packet from this sensor contains information about the target process. In this case, the packet may also contain information about other sources. We vary different system parameters to verify our theoretical results in Section \ref{aoi-S} and Section \ref{error-s}. After verifying the theoretical analysis, we investigate the trade-off model described in Section \ref{opt-s} such that each sensor has some sensing abilities and constraints. In this case, the sensors can distribute their sensing abilities over processes, and we investigate the best $\mathbf{P_C}$ distribution to get the minimum sum AoI.
\begin{figure*}[!htbp]
\setlength{\belowcaptionskip}{-10pt}
\centering
\captionsetup{width=0.28\linewidth}
\begin{minipage}[t]{0.32\linewidth}
\centering\includegraphics[height=1.1in]{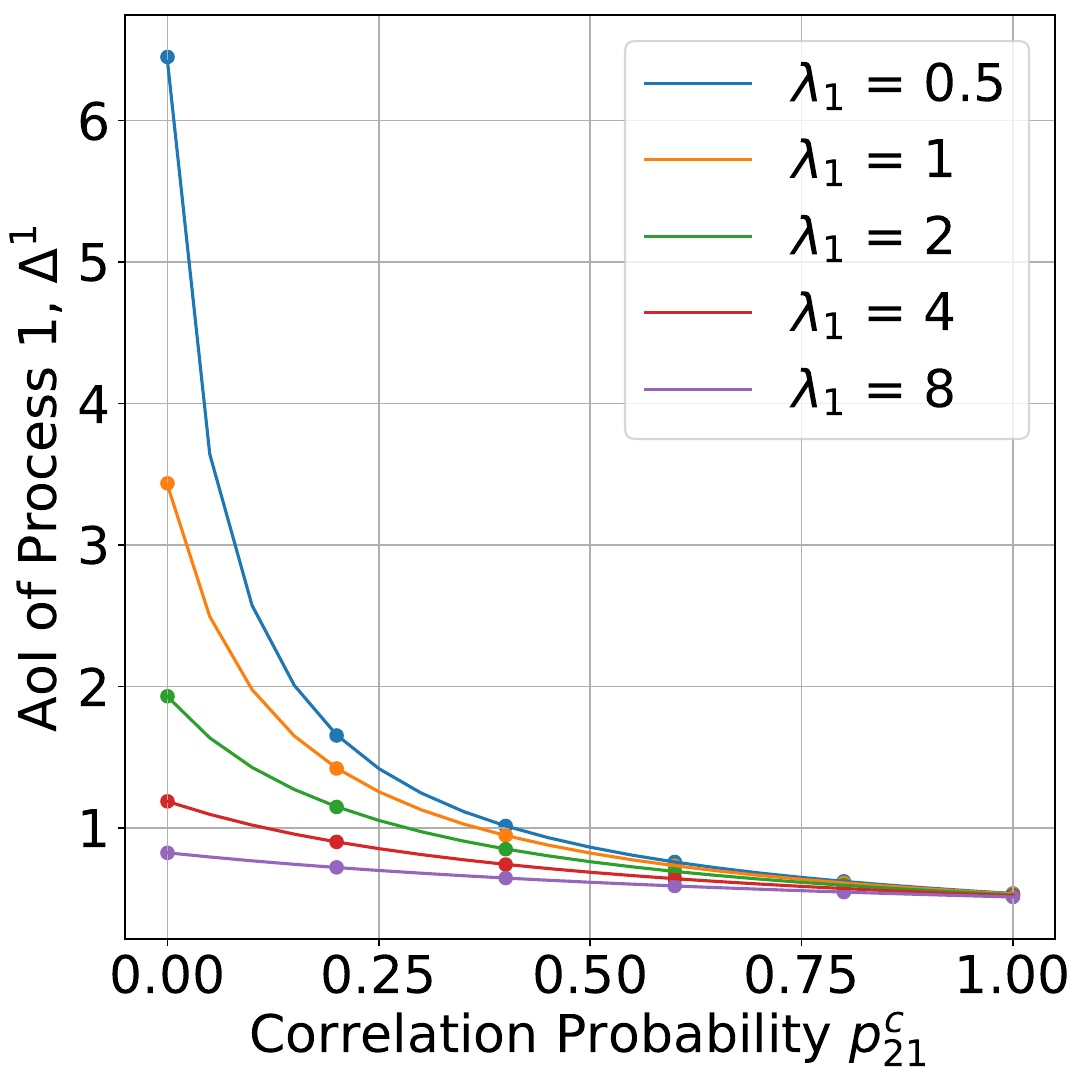}
	\caption{AoI versus $p^c_{21}$ for different $\lambda_i$ values with $\mu=4, \zeta_{1}=4,$ $\zeta_{2}=4, \lambda_2 = 8$. } \label{aoi_fig}
\end{minipage}%
\begin{minipage}[t]{0.32\linewidth}
\centering\includegraphics[height=1.1in]{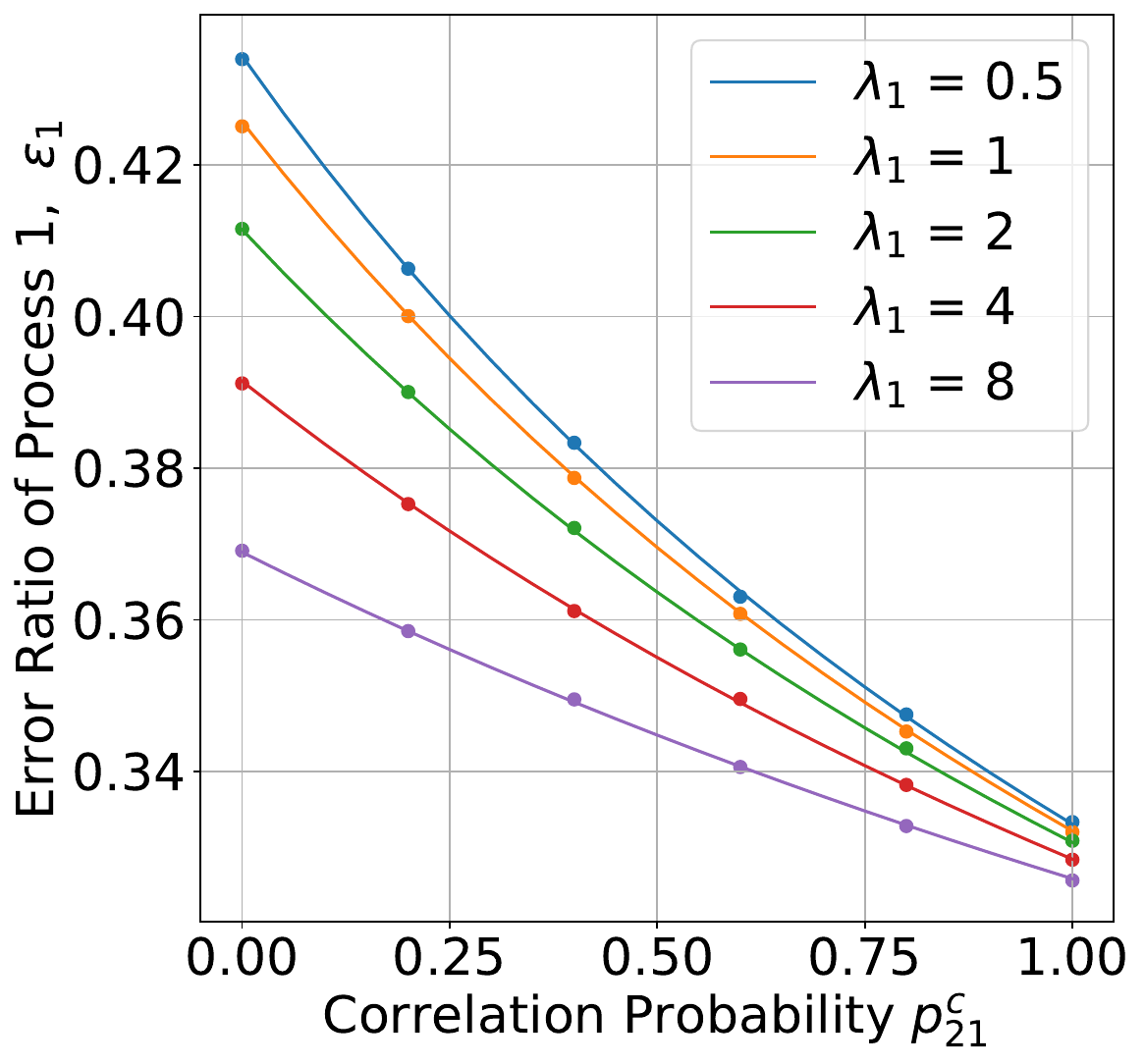}
	\caption{Error $\epsilon_1$ versus $p^c_{21}$ for different $\lambda_i$ values with $\mu=4, \lambda_2 = 8$.} \label{error_fig_corr}
\end{minipage}
\begin{minipage}[t]{0.32\linewidth}
\centering\includegraphics[height=1.1in]{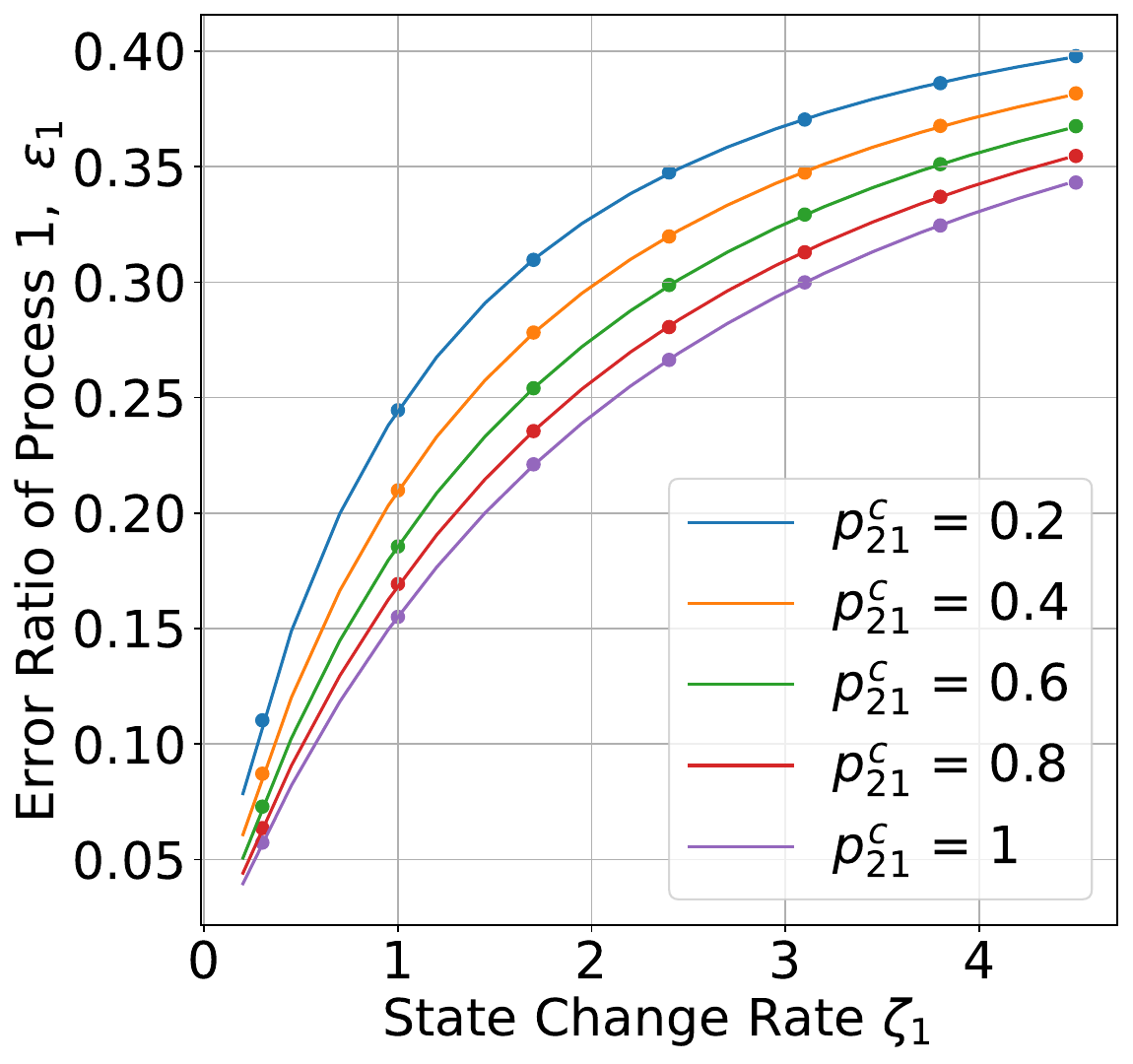}
	\caption{Error $\epsilon_1$ versus $\zeta_{1}$ for different $p^c_{21}$ values with $\mu=4, \zeta_{2}=4, \lambda_1 = 2, \lambda_2 = 8$. 
 } \label{error_fig_state}
\end{minipage}
\end{figure*}

\subsection{Sensor-Process Assignment Analysis}
In this section, we provide numerical results for the average AoI and error ratio in a system with two processes having two states along with two sensors to see the effects of correlation probability. Our simulations are unit-time-based and were run for 1 million units of time. The lowest arrival rate is 0.5 arrivals per unit time. This means we have at least $5\times10^5$ arrivals for each process to guarantee convergence. Simulation results are shown as circles, and theoretical expressions from our analysis are shown as solid lines in the figures. The strong match between them verifies the validity of our analysis. We have three figures that depict different relationships. The first two show the effect of correlation on AoI and error with varying arrival rates. The third one shows how the error changes with the state change rate for different correlations. We set $\boldsymbol{\Omega_1}$, $\boldsymbol{\Omega_2}$ and $\mathbf{P_C}$ as follows for the three cases. 
\[
\boldsymbol{\Omega_1} = \boldsymbol{\Omega_2} = \begin{bmatrix}
0.4 & 0.6 \\
0.3 & 0.7
\end{bmatrix}, \quad \boldsymbol{\mathbf{P_C}} = \begin{bmatrix}
1 & 1-p \\
1-p & 1
\end{bmatrix},
\]
where $p$ is a tuning variable to vary $p^c_{21}$ and $p^c_{12}$. As a result of our analysis, only $col_j(\mathbf{P_C})$ has an effect on metrics for process $j$. Therefore, we only focus on $col_1(\mathbf{P_C})$ in this part.

We start with investigating the effects of correlation on AoI in Figure \ref{aoi_fig}. We vary the correlation probability  $p^c_{21}$ and $\lambda_1$ with a service rate $\mu$ fixed to 4, state change rate for process 1 and process 2, $\zeta_{1}$ and $\zeta_{2}$ fixed to 4, the arrival rate for sensor 2, $\lambda_2$ is fixed to 8. As $p^c_{21}$ increases, AoI decreases as expected. However, the impact of correlation gets smaller with the increase in sensor 1's arrival rate $\lambda_1$. The reason is that an increase in $\lambda^*_1$ causes a diminishing AoI drop, and correlation has a small impact when the $\lambda_1$ is high enough.  As $\lambda_1$ increases, the status updates become more frequent, but there is another limitation on AoI, which is service time. Therefore, the AoI converges while $\lambda_i$ is increasing. However, if $\lambda_1$ is small, and there is another sensor with a high arrival rate $\lambda_2$, correlation plays a huge role in AoI.

Next, we observe the error ratio for the same system in Figure \ref{error_fig_corr}. We vary the correlation probability $p^c_{21}$ and $\lambda_1$ with a service rate $\mu$ fixed to 4, state change rate for process 1 and process 2, $\zeta_{1}$ and $\zeta_{2}$ fixed to 4, arrival rate for sensor 2, $\lambda_2$ fixed to 8. We see a very similar pattern to Figure \ref{aoi_fig} in Figure \ref{error_fig_corr}. As $p^c_{21}$ increases, the error drops as expected because of an increase in the ratio of informative packets in the server. Correlation has a powerful effect on error ratio, unlike AoI for $\lambda_i = 8$. As $p^c_{21}$, the difference between errors of different $\lambda_i$ values decreases because all packets become informative. However, an increase in the arrival rate causes convergence; in other words, an error-less system cannot be achieved because both $\mu$ and $\zeta_{1}$ are equal to 4. State changes during service time cause errors, and the errors causing them cannot be reduced by either increasing the arrival rate or increasing correlation.

Lastly, we investigate the system with a service rate $\mu$ fixed to 4, arrival rate for sensor 1, $\lambda_1$ fixed to 2, arrival rate for sensor 2, $\lambda_2$ fixed to 8. We vary the correlation probability  $p^c_{21}$ and $\zeta_{1}$. We only show the error ratio for this system in Figure \ref{error_fig_state} because changes in $\zeta_{1}$ do not have any effect on AoI. As $\zeta_{1}$ increases, the error increases as expected because the state changes more frequently. For small $\zeta_{1}$, the error is almost zero for all correlation values. The error ratio converges to  1 - $\inner{\boldsymbol{\psi}}{\boldsymbol{\psi}}$, where $\inner{\boldsymbol{\psi}}{\boldsymbol{\psi}}$ represents the inner product of stationary distribution $\boldsymbol{\psi}$ defined in (\ref{stationary_state}), as $\zeta_{1}$ goes to $\infty$ because $\lambda_1$ and $\lambda_2$ lose their effect when $\zeta_{1}$ is too large. Although the start and end points of all plots for error are quite similar, when the correlation is high, the increase becomes slower, which makes the system more robust. 
\subsection{Sensing Ability-Constrained Sensor Optimization Model}
To evaluate the optimal correlation distribution, we consider a system with two sensors and two processes. We vary $\lambda_2$ from $1$ to $100$ and set $\lambda_1 = 1, \mu = 4, \zeta_{1}=0.4,  \zeta_{2}=0.4 $. We also set $\boldsymbol{\Omega_1}, \boldsymbol{\Omega_2}$ and $\mathbf{b}$ as follows: 
\[
\boldsymbol{\Omega_1} = \boldsymbol{\Omega_2} = \small\begin{bmatrix}
0.4 & 0.6 \\
0.3 & 0.7
\end{bmatrix}\normalsize, \quad \mathbf{b} = \begin{bmatrix} 1 & 1 \end{bmatrix}.
\]Then, we use the SLSQP algorithm \cite{slsqp} for convex problems to converge to the global optimum. For the non-convex part, we use grid search to find the distribution that provides the global minimum AoI to avoid converging to a local minimum. We use a step size of $10^{-3}$ for each probability variable.
%Our numerical findings are presented for two sensors and two process systems. 
We use the three constraints defined in Section \ref{opt-s}, and we obtain that the constraints are always tight when $ \mathbf{b} \leq M$ and sensors use their maximum sensing abilities, which verifies our argument derived from KKT conditions. The other result we see is that an equal distribution of the sensors' sensing probabilities among processes is optimal for the first two constraints.

After verifying the first two constraints, we evaluate the third constraint that makes the problem non-convex. The results are shown in Figure \ref{pc21} and we only present $p^{c*}_{21}$ and $p^{c*}_{22}$ for simplicity, since we obtain $p^{c*}_{11}=1$ and $p^{c*}_{12} = 0$ for all $\lambda_2$. Note that switching $p^{c*}_{11}$ and $p^{c*}_{21}$ to $p^{c*}_{12}$ and $p^{c*}_{22}$, respectively, results in the same AoI value due to the symmetry of the case. Consequently, we observe equal sensing probabilities are not always the best distribution, in contrast to the previous cases. When sensor $\lambda$'s are close to each other, assigning a sensor to a process provides the minimum AoI. However, if one sensor dominates the other in terms of arrival rate, the equal distribution provides the optimal AoI for the dominating sensor. Interestingly, we observe that the value of $p^{c*}_{21}$ does not change smoothly from $1$. We obtain $p^{c*}_{21} = 0 $ if $\lambda_2 \lesssim  3.18$ but we see a dramatic regime switch to $0.08$ after around $3.18$. The reason behind this is tracking more than one process causes a loss of total sensing ability. Therefore, if $\lambda$'s are close to each other, there is a tendency to track a single process from each sensor. Nevertheless, if one $\lambda$ is significantly larger than the other, the system will converge to the configuration with a single sensor. In order to avoid infinite AoI values, both processes must be tracked by the dominating sensor. 

Lastly, in addition to AoI minimization, we also consider the error minimization problem. Interestingly, the same trend can be observed in the error minimization problem. The optimal assignment involves assigning a process to the sensor until it reaches a $\lambda_2$ threshold, albeit different from the AoI $\lambda_2$ threshold. After the threshold, it converges to equal distribution while $\lambda_2$ is increasing. Note that, although only $\mathbf{b}$ and $\lambda_1$ affect the $\lambda_2$ threshold for AoI, other parameters like $\Omega_1$, $\Omega_2$, $\zeta_{1}$, $\zeta_{2}$, and $\mu$ also have an effect on the $\lambda_2$ threshold for error.
\begin{figure}[!htbp]
\setlength{\belowcaptionskip}{-10pt}
\centering

\begin{subfigure}[t]{0.5\linewidth}
\captionsetup{width=0.85\linewidth}
\centering
\includegraphics[height=1in]{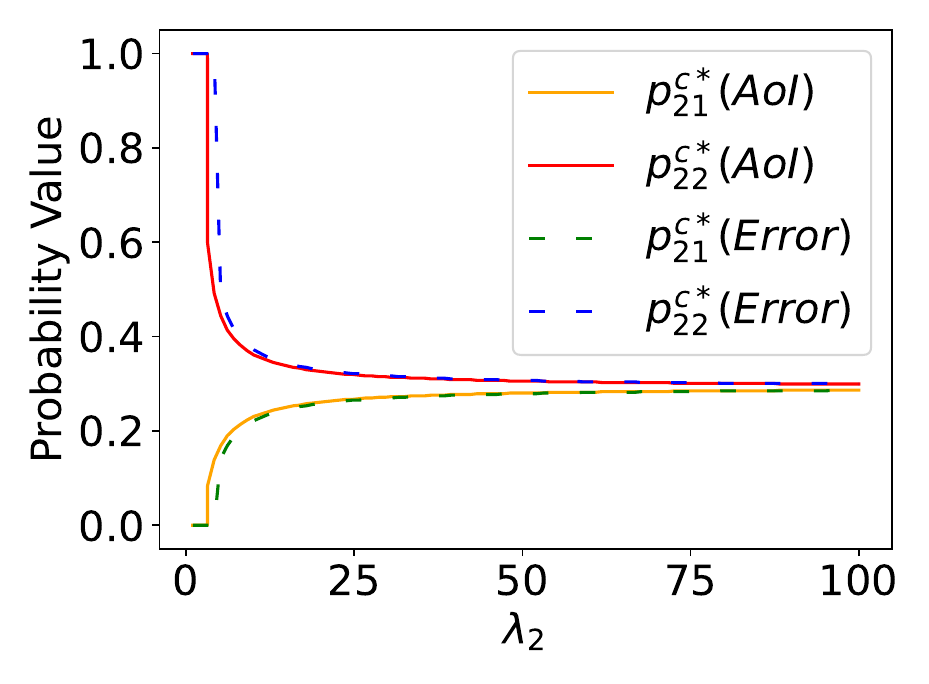}
\caption{$\lambda_2$ versus optimal $p^{c*}_{21}$ and $p^{c*}_{22}$ value with $\lambda_1 = 1, \mu = 4, \mathbf{b} = 1, \zeta_{1}=0.4,\zeta_{2}=0.4$.}
\label{pc21a}
\end{subfigure}%
\begin{subfigure}[t]{0.5\linewidth}
\captionsetup{width=0.85\linewidth}
\centering\includegraphics[height=1in]{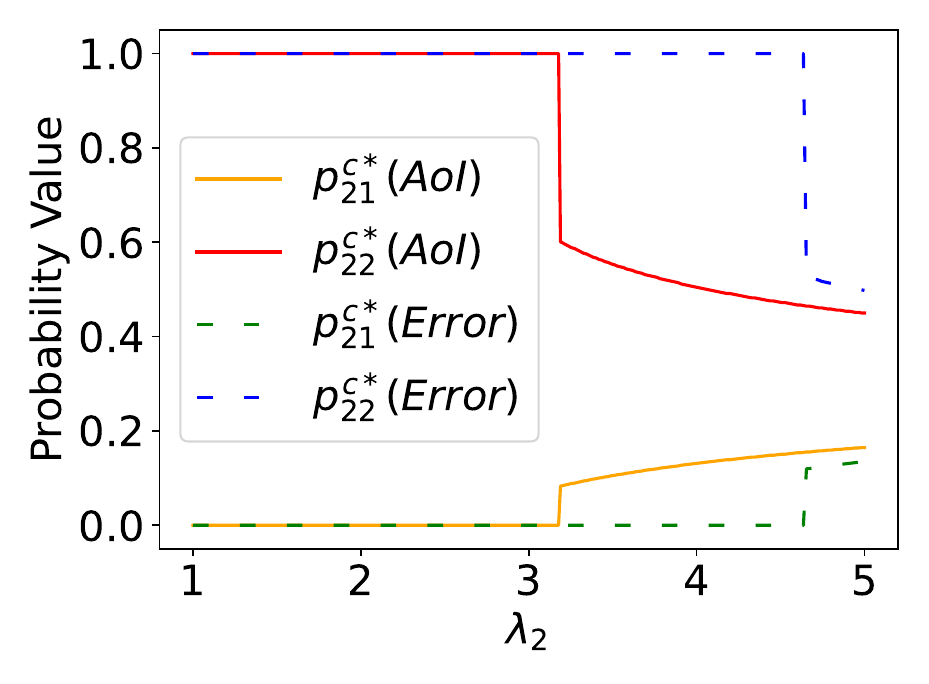}
\caption{A closer look at the regime switch in Figure a.}
\label{pc21b}
\end{subfigure}

\caption{Optimal distribution for quadratic non-convex distribution example.}
\label{pc21}
\end{figure}

\section{Conclusion}\label{conc-s}

Our paper investigates multi-process multi-sensor systems with a single server and correlated information. The study focuses on understanding the effects of correlation on the system, particularly in terms of the age of information and the error ratio of the monitor's estimation. We present analytical expressions and conducted comparisons to evaluate the influence of correlation across different system parameters. Our work also delves into how sensors with restricted sensing abilities should allocate resources across processes. We have highlighted the links between correlation, processes' state change rate, and optimization tactics for managing the age of information. Interestingly, we identified cases where monitoring multiple processes from one source may not always be beneficial, and optimal resource distribution for different arrival rates may exhibit critical regime shift behavior.

\section*{Acknowledgment}
This work was supported in part by NSF grant CNS-2219180.

\bibliographystyle{ACM-Reference-Format}
\bibliography{references}

\appendix
\section{Simulation Results for Impact of Buffer Length on Age of Information} 
To gain insights into the impact of buffer length on system performance, we conduct experiments using various buffer configurations, ranging from zero buffer to larger buffer lengths. We compare the performance of a zero buffer to that of a single buffer below, but similar results were observed for other buffer sizes. Our primary focus is on evaluating the effect of buffer length on AoI and our numerical simulations suggest that having a buffer is not always optimal in our system. To illustrate this, we compared zero-buffer and one-buffer systems with $2$ processes, $2$ correlated sensors and a fixed service rate $\mu= 2.5$.
We also set $\mathbf{P_C}$ as follows: 
\[
\boldsymbol{\mathbf{P_C}} = \begin{bmatrix}
1 & 0.5 \\
0.5 & 1
\end{bmatrix},
\]
The 3D plot in Figure \ref{fig:zerobuffer} illustrates that the zero-buffer configuration generally outperforms the one-buffer setup in minimizing total AoI across varying arrival rates, with the zero-buffer mostly resulting in lower AoI values. This suggests that, despite the additional queuing capacity provided by a buffer, the associated delay can cancel its potential benefits, particularly as arrival rates increase. The findings underscore that while a buffer may be beneficial in specific scenarios, its optimality is highly dependent on system conditions, and in many cases, the zero-buffer approach remains superior for minimizing AoI. Given all the above, we adopted the no-buffer assumption in our paper. The optimal queuing scheme for multi-correlated sources within the AoI literature remains an open question, which we will address in our future work.

\begin{figure}[h]
  \centering
  \includegraphics[width=0.35\textwidth]{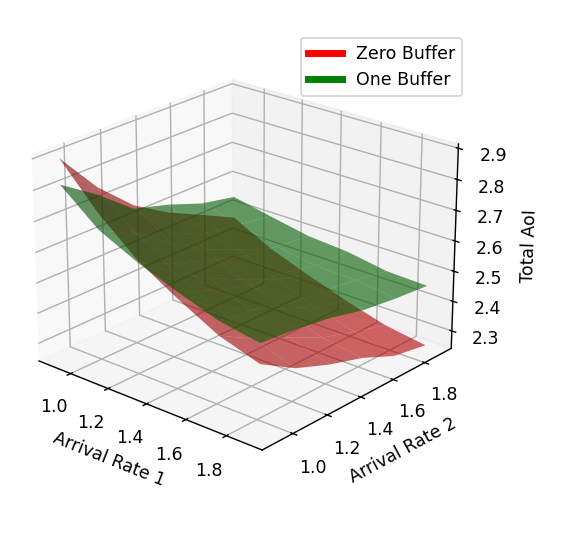}
  \caption{Total AoI in the 2 sensors and 2 processes system, $\mu= 2.5$.}
  \label{fig:zerobuffer}
\end{figure}

\section{Proof of Lemma \ref{Lem1} } 
\label{reduction-P}
To prove our lemma, we apply the splitting property of the Poisson process. Let $N(t)$ be a Poisson process with rate parameter $\lambda$. If events are split into two groups with probabilities $p$ and $1-p$, then the resulting processes $N_1(t)$ and $N_2(t)$ are independent Poisson processes with rate parameters $p\lambda$ and $(1-p)\lambda$ respectively\cite{splitting_poisson}.

From process $j$ perspective, we can split arrivals from sensor $i$ into two groups as informative and uninformative arrivals with probabilities $p^c_{ij}$ and $1-p^c_{ij}$. The rate of arrivals from sensor $i$ is $\lambda_i$, so the rate of informative arrivals for process $j$ from sensor $i$ is $p^c_{ij}\lambda_j$. Then, we can merge all informative arrivals as a single process because all of them are Poisson processes. Total arrival rate of packets with information of process $j$ is denoted by $\lambda_{j}^*$ and it equals to

\begin{equation}
\lambda_{j}^* = \sum_{i=1}^{N} p^c_{ij}\lambda_i
\end{equation}
As a vector,
\begin{equation}
{\boldsymbol{\lambda^*}}^T = \begin{bmatrix}
\lambda_{1}^* & \lambda_{2}^* & \dots & \lambda_{M}^*\end{bmatrix} = \boldsymbol{\lambda}^T\mathbf{P_C}
\end{equation}

The importance of the packet is whether it has information of process $j$ so  we can say that The system with $N$ sensors and arrival rates $\boldsymbol{\lambda}$ shown in Figure \ref{fig:systemmodel} equivalents to the system with two sources and arrival rates of $\lambda^*_j$ and $\lambda_C - \lambda^*_j$ shown in Figure \ref{fig:equv_model} from process $j$'s perspective
\section{Proof of Theorem \ref{The1} } \label{aoi-theorem}

We need to find $\mathbb{E}[\Tilde{Y}_j]$ and $\mathbb{E}[\Tilde{Y}_j^2]$. First, let $\mathrm{A}$ be a set defined as a set that consists of indices corresponding to arrivals between the $(l-1)$-th and $l$-th informative arrivals. Note that $l$-th informative arrival may not correspond to $l$-th arrival if there is any uninformative arrival before it. So, let the $l$-th informative arrival correspond to the $k$-th  arrival, and let the $(l-1)$-th informative arrival correspond to the $(k-r)$-th  arrival.
$A$ can be rewritten as,
\begin{equation}
A = \{a \mid k-r < a \leq k \text{ for } r \geq 1 \text { and } k \geq r\} 
\end{equation}
where $P( r) = (1-\Tilde{p_j})^{r-1}\Tilde{p_j} \quad \text{for }  r \geq 1$. From \cite{mm1}, we also know that  
\begin{align}
    \mathbb{E}[Y] = \frac{1}{\lambda_C}+\frac{1}{\mu}
    \label{expectedy}
    \end{align}
\begin{align}
\mathbb{E}[Y^2] = \frac{2(\lambda_C^2+\lambda_C\mu+\mu^2)}{\lambda_C^2\mu^2} 
    \label{expectedy2}
\end{align}

When we express $\Tilde{Y}_{j}^l$ in terms of $Y_{a}$, we get

\begin{equation}
\Tilde{Y}_{j}^l = \sum_{a \in A} Y_a 
\end{equation}
\begin{align}
\nonumber \mathbb{E}[\Tilde{Y}_j] = \sum_{r=1}^{\infty}P(R = r)r\mathbb{E}[Y] \\\nonumber
= \mathbb{E}[Y]\sum_{r=1}^{\infty}(1-\Tilde{p_j})^{r-1}\Tilde{p_j}r\\\nonumber
= \mathbb{E}[Y]\mathbb{E}[R]\\
= \mathbb{E}[Y]/\Tilde{p_j} \end{align}

Similarly, 
\begin{align}
\mathbb{E}[\Tilde{Y}_i^2|r] = \mathbb{E}[(\sum_{a \in A}Y_a)^2] = r\mathbb{E}[Y^2]+ 2\binom{r}{2}\mathbb{E}[Y]^2
\end{align}

\begin{align}
\mathbb{E}[\Tilde{Y}_i^2] = \sum_{r=1}^{\infty}\Tilde{p_i}(1-\Tilde{p_j})^r\mathbb{E}[\Tilde{Y}_i^2|R=r] \nonumber\\
= \frac{\mathbb{E}[Y^2]}{\Tilde{p_j}} + \frac{2\mathbb{E}[Y]^2(1-\Tilde{p_j})}{\Tilde{p_j}^2}
\end{align}

After that, we replace $\mathbb{E}[\Tilde{Y}_j]$ and $\mathbb{E}[\Tilde{Y}_j^2]$ to findings in equation (\ref{ageofinformation}) to get
  \begin{align}
 \Delta_i = \frac{\lambda_C}{\lambda_C+\mu}\big(\frac{\mu\mathbb{E}[Y^2]}{2} + \frac{\mu\mathbb{E}[Y]^2(1-\Tilde{p_j})}{\Tilde{p_j}} + \mathbb{E}[Y]\big)
\end{align}

\section{Proof of Lemma \ref{Lem3} } \label{statechange-P}

Both state change and service time events are exponential random variables, so they are memoryless. Using the memoryless property, the probability distribution function of the remaining service time $t$ is

\begin{align}
    f_T(t | \mu) = \mu e^{-t\mu}
\end{align}

The number of state changes $k$ over time $t$ is a Poisson process is

\begin{align}
P(k  | t) = \frac{e^{-(\zeta t)} (\zeta t)^k}{k!}
\end{align}

Next, we determine the probabilities of transitioning from the initial state $i$ to the final state $j$ after $k$ transitions by evaluating $(\boldsymbol{\Omega}^k)_{ij}$. After that, we get the matrix consisting of transitioning probabilities from state $i$ to state $j$ when service time is $t$ and the number of  state changes $k$ as
\begin{align}
\mathbf{P_N}(k,t) =  \mu e^{-\mu t}  \frac{e^{-(t\zeta)} (t\zeta)^k}{k!}\boldsymbol{\Omega}^k dt
\end{align}
\allowdisplaybreaks

Then, we obtain
\begin{align}
\mathbf{P_N} = \int_{0}^{\infty} \mu e^{-\mu t} \sum_{k=0}^{\infty} \frac{e^{-(t\zeta)} (t\zeta)^k}{k!}\boldsymbol{\Omega}^k dt \nonumber\\\nonumber
=\sum_{k=0}^{\infty} \int_{0}^{\infty} \mu e^{-\mu t}  \frac{e^{-(t\zeta)} (t\zeta)^k}{k!}\boldsymbol{\Omega}^k dt \\\nonumber
= \sum_{k=0}^{\infty} \frac{\boldsymbol{\Omega}^k}{k!} \int_{0}^{\infty} \mu e^{-\mu t}e^{-(t\zeta)} (t\zeta)^k dt \\\nonumber
= \sum_{k=0}^{\infty} \frac{\boldsymbol{\Omega}^k}{k!} \operatorname{\Gamma}\left(k+1,0\right){\zeta}^k{\mu}\cdot\left({\mu}+{\zeta}\right)^{-k-1} \\\nonumber \text{where $\operatorname{\Gamma}\left(k+1,0\right) = k!$}\\\nonumber
= \sum_{k=0}^{\infty} \boldsymbol{\Omega}^k{\zeta}^k{\mu}\cdot\left({\mu}+{\zeta}\right)^{-k-1}\\\nonumber
= \frac{\mu}{\mu+\zeta} \sum_{k=0}^{\infty} \left(\frac{\zeta\boldsymbol{\Omega}}{\mu+\zeta}\right)^k\\\nonumber
= \frac{\mu}{\mu+\zeta}  \sum_{k=0}^{\infty}\left(\frac{\zeta\boldsymbol{\Omega}}{\mu+\zeta}\right)^k \\
= \frac{\mu}{\mu+\zeta} \left(\mathbf{I} - \frac{\zeta\boldsymbol{\Omega}}{\mu+\zeta}\right)^{-1} 
\end{align}

\section{Proof of Lemma \ref{Lem4}} \label{transition_matrices}
 
\begin{figure*}[h]
    \centering
    \begin{subfigure}{0.35\textwidth}
        \centering
        \includegraphics[width=\linewidth]{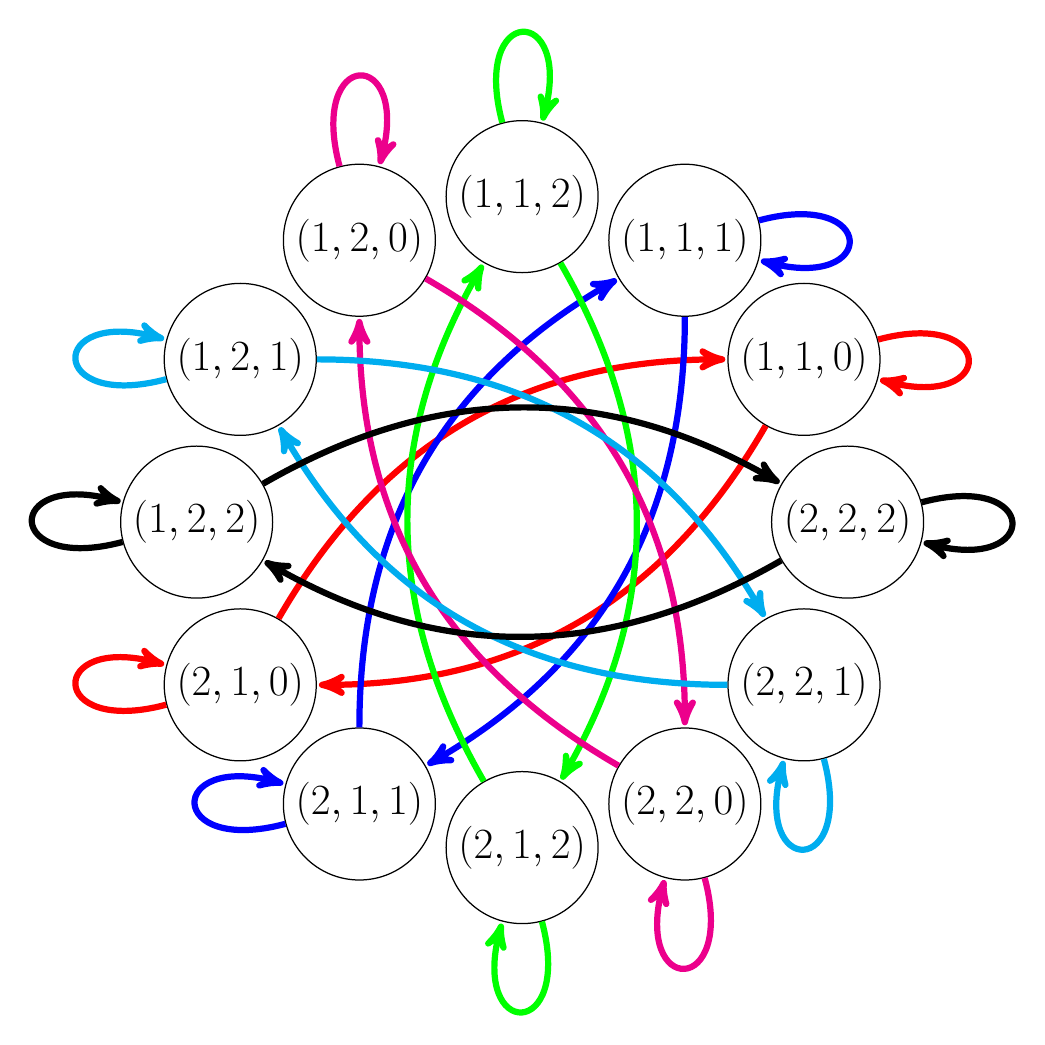}
        \caption{Possible transitions when a state change occurs.}
        \label{markov_state_change}
    \end{subfigure}%
    \begin{subfigure}{0.35\textwidth}
        \centering
        \includegraphics[width=\linewidth]{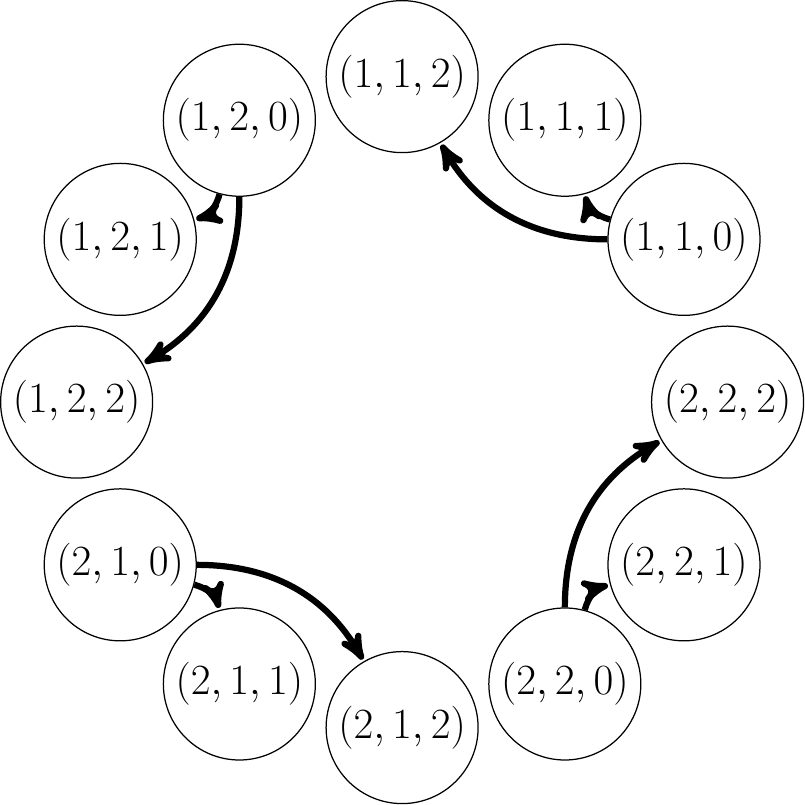}
        \caption{Possible transitions when a packet arrives.}
        \label{markov_arrivals}
    \end{subfigure}%
    
    \begin{subfigure}{0.35\textwidth}
        \centering
        \includegraphics[width=\linewidth]{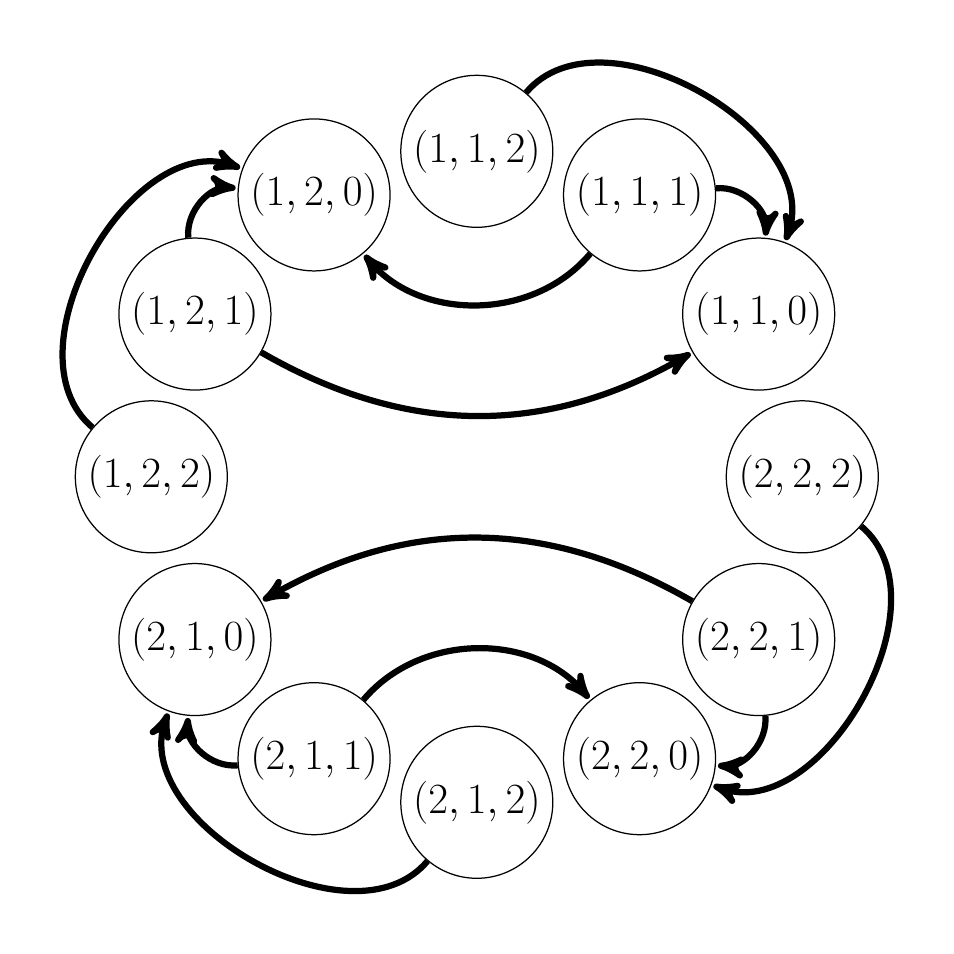}
        \caption{Possible transitions when a packet departs.}
        \label{markov_departures}
    \end{subfigure}% 
    \begin{subfigure}{0.35\textwidth}
        \centering
        \includegraphics[width=\linewidth]{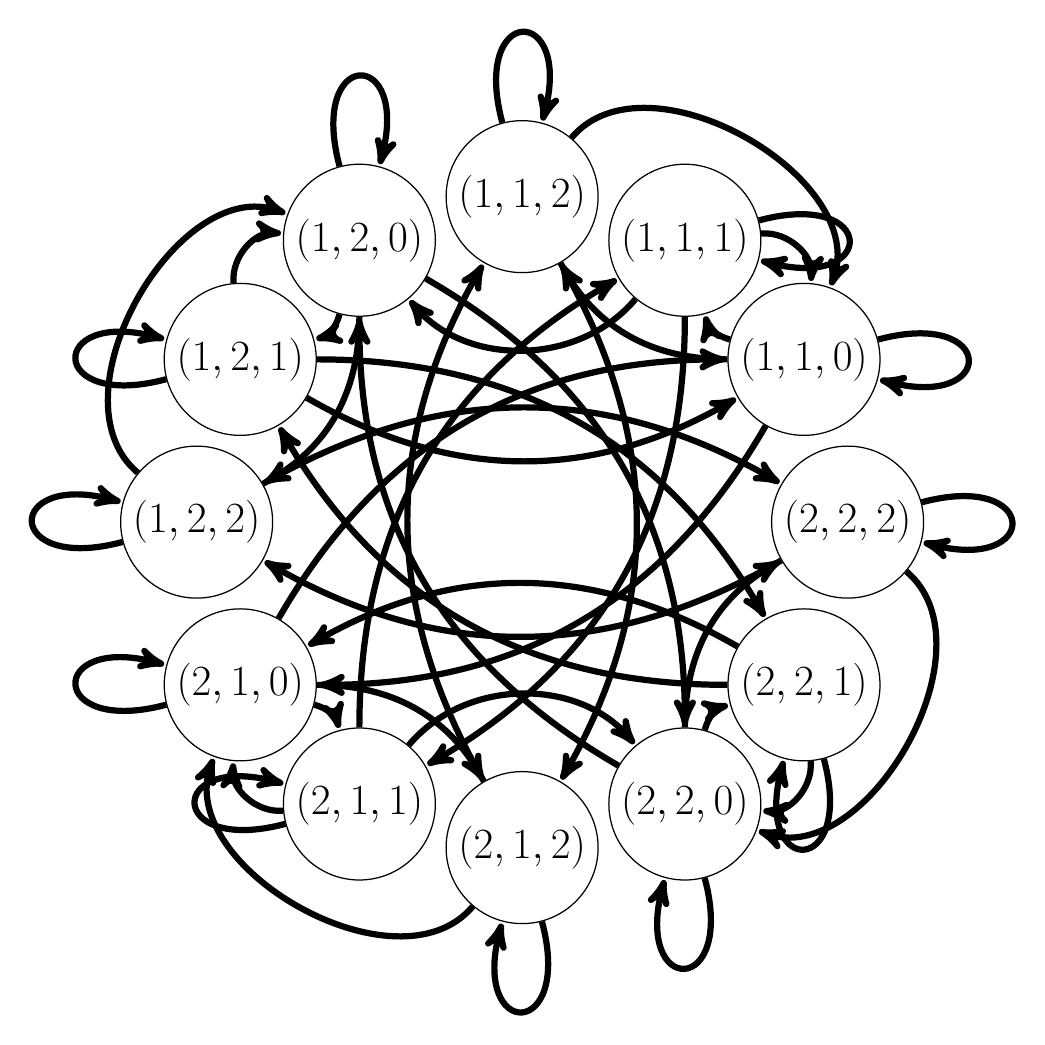}
        \caption{All transitions together.}
        \label{markov_all}
    \end{subfigure}%
    \caption{Transitions of $\mathbf{P_M}$.}
    \label{pm_chains}
\end{figure*}

To understand transitions and their probabilities, first, we look at events that may occur and cause a state transition. There are three events in the system: packet arrival, packet departure, and state change for the given process. If the server is idle and an event occurs, the event may be packet arrival or state change for the given process with probabilities $\frac{\lambda_{C}}{\zeta + \lambda_C}$ or $\frac{\zeta}{\zeta + \lambda_C}$, respectively but it can not be packet departure. The arrivals may be informative and change the information the monitor has or not with probabilities $\Tilde{p_i} = \frac{\lambda_{i}^*}{\lambda_C}$ or $(1 -\Tilde{p_i}) = \frac{\lambda_C-\lambda_{i}^*}{\lambda_C}$, respectively.  Conversely, if the server is busy with an informative or uninformative packet and an event occurs, the event may be packet departure or state change for the given process with probabilities $\frac{\mu}{\zeta + \mu}$ or $\frac{\zeta}{\zeta + \mu}$, respectively but it can not be packet arrival. Then, we obtain 7 different possibilities that cause transitions. They are:
\begin{enumerate}
    \item \textbf{Server is idle, and state change for the given process happens. }
    \item \textbf{Server is busy with an informative packet, and state change for the given process happens.}
    \item  \textbf{Server is uninformative with an informative packet, and state change for the given process happens.}
    \item  \textbf{Server is idle, and an informative packet arrives.}
    \item  \textbf{Server is idle, and an uninformative packet arrives.}
    \item \textbf{Server is busy with an informative packet, and the packet departs.}
    \item  \textbf{Server is uninformative with an informative packet, and the packet departs.}
\end{enumerate}

To understand transitions better, we illustrate a Markov chain for states $(x,y,z)$ when $K=2$ in Figure \ref{pm_chains}. The transitions caused by the process' state change are shown in Figure (\ref{markov_state_change}). Change in the process' state may affect only the current state $x$ value and the state information the monitor has $y$, server state $z$ stays constant when the process' state changes. In Figure (\ref{markov_arrivals}), the transitions caused by packet arrivals are shown. Packet arrival has an effect on only server state $z$, and $z$ goes from $0$ to $1$ if the packet is informative. Otherwise, $z$ goes from $0$ to $2$. The transitions caused by packet departures are shown in Figure (\ref{markov_departures}). If the packet is uninformative, $z$ goes to $0$ from $2$ because the server becomes idle and $x,y$ stays constant. However, if the packet is informative, the packet has the information when it arrives and it updates $y$ while departing. So, $z$ goes to $0$ from $1$, and $y$ goes to the value the packet has, which can be all possible values with some probability. Lastly, we show all transitions in Figure (\ref{markov_all}).

Then we calculate each transition probability to form $\mathbf{P_M}$ matrix. First, state change does not affect the information the monitor has and the server's state, which are $y$ and $z$, but it causes a transition for $x$ from initial $x_1$ to $x_2$ with probability $\Omega_{x_1x_2}$ for every \( x_1,x_2 = 1, \ldots, K \). After having those probabilities, we get equations (\ref{0_state_eq}),  (\ref{1_state_eq}), and (\ref{2_state_eq}) as multiplication of $\Omega_{x_1x_2}$ and probability of having state change before packet arrival or departure. 

Packet arrival does not affect the current state of the process and the information the monitor has, which are $x$ and $y$, but it causes a transition for $z$ from initial $0$ to $1$ or $2$ with probabilities $\frac{\lambda_{i}^*}{\lambda_C}$ and $\frac{\lambda_C-\lambda_{i}^*}{\lambda_C}$, respectively. Then, we get the equations (\ref{informative_arrival_eq}) and (\ref{uninformative_arrival_eq}), as multiplication of the probability of being informative or uninformative and probability of having packet arrival before state change. 

Uninformative packet departure does not affect the current state of the process $x$ and the information the monitor has $y$, but it causes a transition for $z$ from initial $2$ to $0$. Therefore, we get equation (\ref{uninformative_departure_eq}) as the probability of having packet departure before state change.

Informative packet departure does not affect the current state of the process $x$, but it causes a transition for $(y,z)$ from initial $(y_1,1)$ to $(y_2,0)$ with some probability. To find this, we define two events, $X_1$ and $X_2$, as the state of the process when a packet arrives and the state of the process when the packet departures. Then, the probability of transition from $(x_1,y_1,1)$ to $(x_1,y_2,0)$ if packet departure happens before state change of process is $P[X_1 = y_2 | X_2 = x_1]$ because the packet has the information of state when it arrives. We calculated $P[X_2 = x_2 | X_1 = y_2] = p^n_{y_2x_1}$ in Lemma \ref{Lem3} which is the probability of the process' state is $x_2$ after the packet is served when the process' state is $y_2$ initially. Using Bayes' rule, we obtain:

\begin{equation}
P[X_1 = y_2 | X_2 = x_1] = \frac{P[X_2 = x_1 | X_1 = y_2] \cdot P[X_1 = y_2]}{P[X_2 = x_1]}
\end{equation}

where $P[X_2 = x_1] = \psi_{x_1}$ and $P[X_1 = y_2] = \psi_{y_2}$ in equation (\ref{stationary_state}). After getting $P[X_1 = y_2 | X_2 = x_1]$ as $p^n_{y_2x_1}\frac{\psi_{y_2}}{\psi_{x_1}}$, we multiply it by the probability of packet departure before state change of process to get equation (\ref{informative_departure_eq}).

\section{Proof of Lemma \ref{Lem5}} \label{expected_times}
Time is spent in a state at every visit until an event happens. All three events have exponential distribution between two occurrences, and they are Poisson processes. Let the event of the process' state change, the event of arrival, and the event of departure be $N_S(t)$ with rate parameter $\zeta$, $N_A(t)$ with rate parameter $\lambda_C$, $N_D(t)$ with rate parameter $\mu$, respectively. Using the merging property of the Poisson process, we obtain the event occurrence process for each state. The event of the process' state change can happen in all states. The event of arrival happens only when the server is idle, which is z = 0. The event of departure happens only when the server is busy, which is z = 1 and z = 2. 

For any x and y when z = 0, the possible events are the event of arrival or the event of the process' state change, so the event occurrence process becomes $N_S(t)$+$N_A(t)$, which is a Poisson process with rate $\zeta + \lambda_C$ and the expected interarrival time is equal to $\mathbb{E}[T_{(x, y, 0)}]$. Thus,
For every \( x, y = 1,\ldots, K \):

\begin{equation}
\mathbb{E}[T_{(x, y, 0)}] = \frac{1}{\zeta+\lambda_C} 
\end{equation}

For any x and y when z = 1 or z = 2, the possible events are the event of departure or the event of the process' state change, so the event occurrence process becomes $N_D(t)$+$N_A(t)$, which is a Poisson process with rate $\mu + \lambda_C$ and the expected interarrival time is equal to $\mathbb{E}[T_{(x, y, 1)}]$ or $\mathbb{E}[T_{(x, y, 2)}]$. Thus,
For every \( x, y = 1,\ldots, K \):

\begin{equation}
\mathbb{E}[T_{(x, y, 0)}] = \frac{1}{\zeta+\lambda_C} 
\end{equation}

\begin{equation}
\mathbb{E}[T_{(x, y, 1)}] = \mathbb{E}[T_{(x, y, 2)}]  = \frac{1}{\zeta+\mu}  
\end{equation}

\section{Proof of Lemma \ref{Lem6}} \label{convexity-P}

As a first step, we aim to prove that the function \(g(\mathbf{\Tilde{p}})\) defined as \(g(\mathbf{\Tilde{p}}) = \frac{1}{\Tilde{p}_1} + \frac{1}{\Tilde{p}_2} + \ldots + \frac{1}{\Tilde{p}_n}\) is complex using its Hessian matrix.The Hessian matrix of \(g\) is given by:
\[
H_g = 
\begin{bmatrix}
\frac{2}{\Tilde{p}_1^3} & 0& \cdots & 0 \\
0 & \frac{2}{\Tilde{p}_2^3} & \cdots & 0 \\
\vdots & \vdots & \ddots & \vdots \\ 0 & 0 & \cdots & \frac{2}{\Tilde{p}_N^3}
\end{bmatrix}
\]

All the diagonal elements of the Hessian matrix are positive when $\Tilde{p}_i$ is positive for all $i$. We know that $\Tilde{p}_i$ is formed using probabilities and positive arrival rates so all the diagonal elements of the matrix are positive. In addition,  all the off-diagonal elements are zero. This makes the Hessian matrix positive-definite and the function $g$ convex. After that, we want to show that $f(\mathbf{P_C}) = g(\frac{\mathbf{P_C}\boldsymbol{\lambda}}{\lambda_C})$ is convex. $g(\frac{\mathbf{P_C}\boldsymbol{\lambda}}{\lambda_C})$ is an affine composition of $g$ so the operation preserves the functions convexity.

\end{document}